\documentclass[journal,onecolumn,12pt,article]{IEEEtran}
\usepackage[doublespacing,nodisplayskipstretch]{setspace}
\usepackage{amsmath,graphicx,mathtools}
\usepackage{amssymb,amsthm,verbatim,mathtools,mathabx}
\usepackage[english]{babel}
\usepackage{csquotes}
\usepackage{bm,bbm}
\usepackage{xcolor,tikz}
\usepackage{url,cite}
\usepackage[colorlinks,pdfstartview=FitH,citecolor=blue,linkcolor=blue,urlcolor=blue]{hyperref}
\usepackage{transparent}
\usepackage{mathrsfs}
\usepackage{subcaption} 
\usepackage{graphicx}   
\usetikzlibrary{calc,bending,backgrounds}
\definecolor{lightgray}{RGB}{224,224,224}

\newtheorem{theorem}{Theorem}
\newtheorem{exmp}{Example}
\newtheorem{corollary}{Corollary}
\newtheorem{definition}{Definition}
\newtheorem{proposition}{Proposition}
\newtheorem{lemma}{Lemma}
\newtheorem{remark}{Remark}

\newcommand{\srr}{\mathcal{S}(\mathbf{G})}

\newcommand{\gaussbinom}[2]{\left[ \genfrac{}{}{0pt}{}{#1}{#2} \right]_2}


\author{Hoang~Ly, 
        Emina~Soljanin,
        and V.~Lalitha
\thanks{H.~Ly and E.~Soljanin are with the Department of Electrical and Computer Engineering, Rutgers, the State University of New Jersey, Piscataway, NJ 08854, USA. E-mail: \{\texttt{mh.ly,emina.soljanin}\}@rutgers.edu. V.~Lalitha is with Signal Processing and Communications Research Center, IIIT Hyderabad, India. E-mail: lalitha.v@iiit.ac.in.}
\thanks{This research has been presented in part at the Joint Mathematics Meetings (JMM), Seattle, January 2025, and at the IEEE International Symposium on Information Theory (ISIT), Ann Arbor, June 2025.}}

\begin{document}
\tikzset{every picture/.style={line width=1.15pt}}
\title{On the Service Rate Region of Reed--Muller Codes}

\maketitle



\begin{abstract}
   We study the Service Rate Region of Reed--Muller codes in the context of distributed storage systems. The service rate region is a convex polytope comprising all achievable data access request rates under a given coding scheme. It represents a critical metric for evaluating system efficiency and scalability.  Using the geometric properties of Reed--Muller codes, we characterize recovery sets for data objects, including their existence, uniqueness, and enumeration. This analysis reveals a connection between recovery sets and minimum-weight codewords in the dual Reed--Muller code, providing a framework for identifying those recovery sets. Leveraging these results, we derive explicit and tight bounds on the maximal achievable demand for individual data objects, thereby defining the maximal simplex contained within the service rate region, and the smallest simplex containing it. These two simplices provide a tight approximation to the service-rate region of Reed--Muller codes. 
   
\end{abstract}

\begin{IEEEkeywords}
\noindent
service rate region, Reed--Muller codes, finite geometry, distributed storage, recovery sets, weight enumerator.
\end{IEEEkeywords}

\newpage
\section{Introduction}\label{intro}
Modern computing systems depend on efficient data access from their underlying storage layers to deliver high overall performance. To ensure reliability and balance server load, storage systems commonly replicate data objects across multiple nodes. The level of replication typically reflects the expected demand for each object. \textcolor{black}{However, in modern systems, such as edge computing, access patterns are often skewed and can frequently change \cite{edge:YadgarKAS19, edge:KolosovYMS20}. In such settings, redundancy schemes that incorporate erasure coding are expected to be more effective than replication alone~\cite{SRR:journals/tit/AktasJKKS21}.}

\textcolor{black}{With the advent of distributed storage, various new performance metrics have emerged. 
In addition to the standard high fault-tolerance and low overhead, codes should provide efficient maintenance by, e.g., keeping the required repair bandwidth low or the size of repair groups small~\cite{DSS:journals/ftcit/RamkumarBSVKK22}. The Service Rate Region (SRR) has emerged as a fundamental metric for evaluating the efficiency and scalability of data access in coded distributed storage systems, thereby generalizing the earlier notion of availability \cite{SRR:journals/tit/AktasJKKS21}.}

SRR is defined as the set of all simultaneously supportable data access request rates under a given redundancy scheme. The SRR is a convex \emph{polytope} in $\mathbb{R}^k$, where $k$ is the number of distinct data objects in the system. This polytope offers a precise characterization of the system’s throughput capabilities~\cite{SRR:journals/tit/AktasJKKS21}. Recent work has analyzed the SRR for several families of linear codes. In particular, explicit descriptions of the SRR of binary Simplex codes, first-order Reed–Muller codes~\cite{SRR:conf/isit/KazemiKS20}, and Hamming codes~\cite{Service_Hamming,SRR_Design:lySV2025} have been presented. A notable insight from~\cite{SRR:conf/isit/KazemiKSS20} links the integrality of SRR demand vectors to batch codes, showing a one-to-one correspondence between integral SRRs and batch coding schemes. \textcolor{black}{Authors in~\cite{SRR:journals/tit/AktasJKKS21} investigate the geometric attributes of SRR polytopes, while authors in~\cite{SRR_Design:lySV2025} have demonstrated the theoretical relationship connecting SRR to both combinatorial design theory and majority-logic decoding \cite{MLD:LyS25}}. More broadly, the SRR framework generalizes classical load balancing~\cite{LB:AktasFSW21}, and has been connected to majority-logic decoding, combinatorial design theory, and the \emph{incidence pattern} of minimum-weight dual codewords—that is, how the supports of these codewords intersect across coordinate positions~\cite{SRR_Design:lySV2025}. 
Collectively, these results underscore the utility of linear codes in optimizing data access.

Maximum distance separable (MDS) codes are particularly notable for achieving the optimal trade-off between redundancy and reliability~\cite{CodesForDSS:journals/ACMStorage/ZhirongYKPXYJ25,DSS:journals/pieee/DimakisRWS11}. Their SRRs have been rigorously analyzed under systematic encoding~\cite{SRR:journals/tit/AktasJKKS21}, where a water-filling allocation scheme was shown to be optimal. These results were later refined and generalized in~\cite{SRR:lySV2025}. Furthermore, the authors of~\cite {Service:journals/siaga/AlfaranoKRS24} studied various geometric properties of SRR polytopes of general linear codes, including their volumes. Given these developments, a natural next step is to investigate the SRRs of other code families with rich algebraic structure—most notably higher-order Reed–Muller (RM) codes, whose SRRs remain largely uncharacterized. Although many storage allocation problems can be reformulated as hypergraph matching problems~\cite{allocations:journals/tcom/PengNS21} and solved using linear programming relaxations~\cite{LargeMatching:journals/jct/AlonFHRRS12}, these techniques become challenging to apply when the underlying graph structure associated with data/symbol recovery is highly complex or is not explicitly known. RM codes, despite their elegant algebraic construction, induce intricate, only partially understood geometric structures that significantly complicate SRR analysis~\cite{SRR_RM_ISIT}.

RM codes were introduced in 1954 by Muller~\cite{muller1954application}, and shortly thereafter Reed proposed a majority-logic algorithm~\cite{reed1954class}. Recently, RM codes have gained renewed attention for their capacity-achieving performance over binary symmetric and erasure channels~\cite{kudekar2017reed, Reeves, abbesandon}. Their applications span beyond communication—appearing in 5G NR~\cite{5gnr}, compressed sensing~\cite{calderbank2010construction}, private information retrieval~\cite{beimel2005general}, and quantum computation~\cite{Quantum_RM}. In the context of SRR, their utility comes from the abundance of disjoint recovery sets (also called repair groups) per message symbol, as revealed by Reed’s decoding algorithm, and their rich geometric structure~\cite{rm_survey}.

\textcolor{black}{Our main contribution is a rigorous analysis of the SRR for RM codes of arbitrary parameters. Leveraging the correspondence between RM codes and finite Euclidean geometry, we first characterize the recovery sets for data objects, establishing their existence, uniqueness, and enumeration. This analysis reveals a fundamental link between recovery sets and minimum-weight codewords in the dual code, providing a framework that generalizes Reed's classical majority-logic decoding. Building on these geometric insights, we derive explicit, tight bounds on the maximal achievable demand for individual data objects. To address the complexity of the full SRR polytope, we construct a ``maximal achievable simplex'' (inner bound) and a tight ``enclosing simplex'' (outer bound), proving that the latter is at most a factor of 2 larger than the former. This provides a tight approximation of the service region even where a complete characterization is analytically intractable. We substantiate this intractability by establishing an equivalence between listing all recovery sets and the \emph{coordinate-constrained weight enumerator problem}---enumerating all codewords in the dual RM code constrained to specific values at specific indices. Since the enumerator problem (and related problems such as determining higher-order Hamming weights) remains unsolved for general parameters, enumerating recovery sets is generally impossible. Consequently, characterizing the exact SRR remains an open challenge, making our tight approximation results particularly significant.}

\textbf{Paper Organization.} Section~\ref{sec:Problem_statement} defines the SRR in coded storage systems, introduces the recovery graph, and explains how the SRR problem can be reframed into a graph-theoretic one. Section~\ref{sec:RM_codes} reviews the properties of the Reed-Muller code essential for data recovery analysis. Section~\ref{sec:recovery} presents the main results on recovery sets, including their enumeration and connections to dual codewords. Section~\ref{sec:Service_rate} derives explicit SRR bounds for RM codes using the results established earlier. Section~\ref{sec:Conclusion} concludes the paper.
\subsection*{Nomenclatures and Notations}
This section introduces the notation used throughout the paper. Concepts that are well-known in the literature will be emphasized the first time in \emph{italic}, whereas less standard concepts are formally introduced via \emph{Definition}. Matrices and standard basis vectors are denoted in \textbf{boldface}. $\mathbb{N}, \mathbb{R}$ denote the set of nonnegative integers and real numbers, respectively. The finite field over a prime power \(q\) is denoted as \(\mathbb{F}_q\). A \(q\)-ary linear code \(\mathcal{C}\) with parameters \([n, k, d]_q\) is a \(k\)-dimensional subspace of the \(n\)-dimensional vector space \(\mathbb{F}_q^n\) with minimum Hamming distance $d$. The Hamming weight of a codeword \(\boldsymbol{x}\) in \(\mathcal{C}\) is denoted as \(\text{wt}(\boldsymbol{x})\). The symbols \(\boldsymbol{0}_k\) and \(\boldsymbol{1}_k\) denote the all-zero and all-one column vectors of length \(k\), respectively. The standard basis vector (column) with a 1 at position \(i\) and 0 elsewhere is represented by \(\mathbf{e}_i\). The transpose of a vector $\boldsymbol{v}$ is $\boldsymbol{v}^{\top}$. \(\mathrm{Supp}(\boldsymbol{x})\) denotes the support of a codeword \(\boldsymbol{x}\). The set of positive integers not exceeding \(i\) is denoted as \([i]\). Similarly, \([a, b]\) denotes the set of integers from \(a\) to \(b\), \textcolor{black}{inclusive}, where \(a, b \,\in\, \mathbb{N}\) and \(a < b\). The cardinality of a set $\mathrm{E}$ is denoted by $|\mathrm{E}|$. The symbol \( \subsetneq \) denotes a \emph{proper subset}, which means that for sets \( A \subsetneq B \), every element of \( A \) is also in \( B \), but \( A \ne B \); that is, \( A \) is strictly contained in \( B \). 
\section{Problem Statement}\label{sec:Problem_statement}
\textcolor{black}{We begin by introducing the problem of coded storage, where data objects are linearly encoded across server nodes, and defining a \emph{recovery set} as a collection of servers enabling data retrieval. To characterize the SRR of such systems, we leverage the established connection between SRR and fractional hypergraph matching~\cite{SRR:journals/tit/AktasJKKS21, SRR:lySV2025} by constructing a \emph{recovery hypergraph} in which vertices represent servers and hyperedges correspond to recovery sets. This graph-theoretic reformulation not only facilitates the visualization of recovery structure and overlap but also allows us to employ \emph{fractional matching} and \emph{vertex cover} to derive achievability bounds on sum rates. We conclude the section by outlining the main contributions of this work.}

\subsection{Service Rate of Codes}
Consider a storage system in which $k$ data objects $o_1, \hdots, o_k$ are stored on $n$ servers, labeled $1, \hdots, n$, using an $[n, k]_q$ linear code with generator matrix $\mathbf{G} \in \mathbb{F}^{k\times n}_q$. Let $\boldsymbol{c}_j$ denotes the $j$-th column of $\mathbf{G}$ for $1 \le j \le n$. A \textit{recovery set} for object $o_i$ is a subset of servers whose stored symbols can collectively recover $o_i$. Formally, a set $R \subseteq [n]$ is a recovery set for $o_i$ if $\mathbf{e}_i \in \text{span}(\{\boldsymbol{c}_j : j \in R\})$, which means that the unit vector $\mathbf{e}_i$ lies in the column space of the subsample of $\mathbf{G}$ indexed by $R$. Without loss of generality, we restrict our attention to \emph{minimal} recovery sets—those for which no proper subset suffices to recover the symbol, i.e., \( R \) such that \( \mathbf{e}_i \notin \mathrm{span}(S) \) for all \( S \subsetneq R \). This ensures that no more server resources are used than necessary.

\textcolor{black}{In the specific case of binary codes ($q=2$), we define the \emph{incidence vector} $\boldsymbol{\chi}(R)$ as a length-$n$ binary vector where $\boldsymbol{\chi}(R)_j = 1$ if $j \in R$ and $0$ otherwise. Consequently, a set $R$ serves as a minimal recovery set for an object $o_i$ if and only if $R$ is minimal and:
\[
\mathbf{G} \cdot \boldsymbol{\chi}(R)^{\top} = \mathbf{e}_i.
\]}
Let $\mathcal{R}_i = \{R_{i, 1}, \hdots, R_{i, t_i}\}$ be the $t_i \,\in\, \mathbb{N}$ recovery sets for the object $o_i$. We assume that requests to download object $o_i$ arrive at rate ${\lambda_i}$, for all $i \,\in\, [k]$. We denote the request rates for the object $1, \hdots, k$ by the vector $\boldsymbol{\lambda} = (\lambda_1, \hdots, \lambda_k) \,\in\, \mathbb{R}_{\ge 0}^k$. Let $\mu_l \,\in\, \mathbb{R}_{\ge 0}$ be the average rate at which the server $l\,\in\, [n]$ processes requests for data objects (i.e., \emph{server capacity}). We denote the service rates of servers $1 ,\hdots, n$ by a vector $\boldsymbol{\mu} = (\mu_1,\hdots, \mu_n)$, and assume that the servers have \textit{uniform capacity}, that is, $\mu_j = 1,\, \forall\, j \,\in\, [n]$.

Consider the class of scheduling strategies that assign a fraction of all requests for an object to each of its recovery sets. Let $\lambda_{i,j}$ be the portion of requests for object $o_i$ that are assigned to the recovery set $R_{i,j}, j \,\in\, [t_i]$. The service rate region (SRR) $\mathcal{S}(\mathbf{G}, \boldsymbol{\mu}) \subset \mathbb{R}_{\ge0}^k$ is defined as the set of all request vectors $\boldsymbol{\lambda}$ that can be served by a coded storage system with generator matrix $\mathbf{G}$ and service rate $\boldsymbol{\mu}$. Alternatively, $\mathcal{S}(\mathbf{G}, \boldsymbol{\mu})$ can be defined as the set of all vectors $\boldsymbol{\lambda}$ for which there exist $\lambda_{i,j} \,\in\, \mathbb{R}_{\ge 0}, i \,\in\, [k]$ and $j \,\in\, [t_i]$, satisfying the following constraints:
\begin{align}
        \sum_{j=1}^{t_i}\lambda_{i,\, j} &= \lambda_i, \quad \forall\, i \,\in\, [k], \label{eq:SRR_1}\\    
        \sum_{i=1}^{k}\sum_{\substack{j=1 \\ l \,\in\, R_{i,\, j}}}^{t_i}\lambda_{i,\, j} &\le \mu_l, \quad \forall\, l \,\in\, [n] \label{eq:capacity_constraint}\\
        \lambda_{i,\, j} &\,\in\, \mathbb{R}_{\ge 0}, \quad \forall\, i \,\in\, [k],\ j \,\in\, [t_i] \label{eq:SRR_2}.
\end{align}
Constraints~\eqref{eq:SRR_1} ensure that the demands for all objects are met, while constraints~\eqref{eq:capacity_constraint} guarantee that no server is assigned requests at a rate exceeding its service capacity. Vectors \( \boldsymbol{\lambda} \) satisfying these constraints are called \textit{achievable}. For each component \( \lambda_i \) of an achievable vector \( \boldsymbol{\lambda} \), the associated allocation \( \lambda_{i,j},\ j = 1, \dots, t_i \), represents how the total request \( \lambda_i \) is distributed across recovery sets. The collection of all such achievable vectors $\boldsymbol{\lambda}$ forms the \textit{service polytope} in \( \mathbb{R}_{\ge 0}^k \). An important property of the service polytope is that it is convex, as shown in the following lemma.
\begin{lemma}(\hspace{-0.1mm}\cite{SRR:conf/isit/KazemiKS20}, Lemma 1)\label{lem:convexity} $\mathcal{S}(\mathbf{G}, \boldsymbol{\mu})$ is a non-empty, convex, closed, and bounded subset of $\mathbb{R}_{\ge 0}^k$.
\end{lemma}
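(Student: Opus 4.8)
The plan is to realize $\mathcal{S}(\mathbf{G},\boldsymbol{\mu})$ as the image, under a coordinate projection, of the polyhedron $P \subseteq \mathbb{R}^{k+T}$ (where $T = \sum_{i=1}^{k} t_i$) that is cut out by the linear system (1)--(3) in the combined variables consisting of $\boldsymbol{\lambda}$ together with all the allocation variables $\lambda_{i,j}$. Non-emptiness is immediate: taking $\lambda_i = 0$ and $\lambda_{i,j} = 0$ for all $i,j$ satisfies (1)--(3), so $\boldsymbol{0}_k \in \mathcal{S}(\mathbf{G},\boldsymbol{\mu})$.

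For convexity, I would observe that (1)--(3) are finitely many linear equalities and inequalities, so $P$ is a (closed, convex) polyhedron, and $\mathcal{S}(\mathbf{G},\boldsymbol{\mu})$ is exactly the projection $\pi(P)$ onto the first $k$ coordinates; a linear image of a convex set is convex. Equivalently, one checks directly that if $\boldsymbol{\lambda}^{(0)}$ and $\boldsymbol{\lambda}^{(1)}$ are achievable with witnessing allocations $(\lambda^{(0)}_{i,j})$ and $(\lambda^{(1)}_{i,j})$, then for any $\alpha \in [0,1]$ the allocation $\alpha\lambda^{(0)}_{i,j} + (1-\alpha)\lambda^{(1)}_{i,j}$ witnesses achievability of $\alpha\boldsymbol{\lambda}^{(0)} + (1-\alpha)\boldsymbol{\lambda}^{(1)}$, since every constraint in (1)--(3) is preserved under convex combinations.

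For boundedness, I would first note that every (minimal) recovery set $R_{i,j}$ is non-empty, because $\mathbf{e}_i \neq \boldsymbol{0}$ forces $\mathrm{span}(R_{i,j})$ to be non-trivial. Summing the capacity constraints \eqref{eq:capacity_constraint} over all $l \in [n]$ gives
\[
\sum_{i=1}^{k}\sum_{j=1}^{t_i} |R_{i,j}|\,\lambda_{i,j} \;=\; \sum_{l=1}^{n}\ \sum_{i,j:\, l \in R_{i,j}} \lambda_{i,j} \;\le\; \sum_{l=1}^{n}\mu_l \;=\; n ,
\]
and since $|R_{i,j}| \ge 1$ and all $\lambda_{i,j} \ge 0$, combining with (1) yields $\sum_{i=1}^{k}\lambda_i = \sum_{i,j}\lambda_{i,j} \le \sum_{i,j}|R_{i,j}|\lambda_{i,j} \le n$. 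Hence $0 \le \lambda_i \le n$ for every $i$, so $\mathcal{S}(\mathbf{G},\boldsymbol{\mu})$ is bounded; moreover, restricting the same argument to a single $l \in R_{i,j}$ gives $\lambda_{i,j} \le \mu_l = 1$, so the lifted polyhedron $P$ itself is bounded, i.e.\ compact.

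Finally, for closedness, since $P$ is compact and $\pi$ is continuous, the image $\mathcal{S}(\mathbf{G},\boldsymbol{\mu}) = \pi(P)$ is compact, hence closed. (Alternatively, Fourier--Motzkin elimination shows the projection of a polyhedron is again a polyhedron, which is closed by definition.) The one point that genuinely requires care is closedness: projections of merely closed sets need not be closed, so the argument really uses that $P$ is compact, which is precisely what the boundedness step establishes; every other step is a routine manipulation of the defining linear system.
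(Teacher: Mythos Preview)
Your argument is correct. The paper does not actually supply a proof of this lemma; it simply cites it as Lemma~1 of \cite{SRR:conf/isit/KazemiKS20} and uses it as a black box, so there is no in-paper proof to compare against. Your approach---realizing $\mathcal{S}(\mathbf{G},\boldsymbol{\mu})$ as the coordinate projection of the polyhedron $P$ cut out by (1)--(3), then invoking compactness of $P$ to get closedness of the image---is the standard one and is exactly what one would expect the cited proof to do. One cosmetic point: you write $\sum_{l=1}^n \mu_l = n$, which uses the paper's standing uniform-capacity assumption $\mu_l = 1$; the lemma as stated is for general $\boldsymbol{\mu}$, so strictly speaking the bound should read $\sum_i \lambda_i \le \sum_l \mu_l$, but this changes nothing in the argument.
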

Under the uniform-capacity assumption, we may therefore write \( \mathcal{S}(\mathbf{G}, \boldsymbol{1}) \) or simply \( \mathcal{S}(\mathbf{G}) \) in place of \( \mathcal{S}(\mathbf{G}, \boldsymbol{\mu}) \).

\subsection{Recovery Hypergraphs}
A \emph{hypergraph} (or simply a \emph{graph}) is a pair \((V, E)\) in which \(V\) is a finite set of \emph{vertices} and \(E\) is a multiset of subsets of \(V\), called \emph{hyperedges} (or simply \emph{edges}). The \emph{size} of an edge is the cardinality of its defining subset of \(V\). Each generator matrix \(\mathbf{G}\) has a uniquely associated \emph{recovery hypergraph} \(\Gamma_{\mathbf{G}}\) constructed as follows:
\begin{itemize}
  \item There are \(n\) vertices, each corresponding to one distinct column of \(\mathbf{G}\).
\item A hyperedge labeled \( \mathbf{e}_j \) is formed by the set of vertices whose associated columns collectively constitute a recovery set for the basis vector \( \mathbf{e}_j \). If a single column forms a recovery set on its own, we introduce an auxiliary vertex labeled \( \mathbf{0}_k \), and the corresponding edge connects the vertex associated with that column to this auxiliary vertex.

\end{itemize}
For a recovery hypergraph \(\Gamma_{\mathbf{G}}\) and a set of indices \(I = \{j \mid j \,\in\, [k]\} \subseteq [k]\), an \textit{\(I\)-induced subgraph} of \(\Gamma_{\mathbf{G}}\) is obtained by taking only those edges labeled \(\mathbf{e}_j\) for \(j \,\in\, I\), and including all vertices of \(\Gamma_{\mathbf{G}}\) that appear in these edges. An illustrative example of a recovery hypergraph and its $\{3\}$-induced subgraph appear in Example~\ref{ex:recovery_hypergraph}, where \(\mathbf{G}\) is the generator matrix of \(\mathrm{RM}(1,\, 2)\).
\subsection{Fractional Matching and Service Polytopes for Recovery Graphs}
A \emph{fractional matching} in a hypergraph $(V,E)$ is a vector $\boldsymbol{b} \,\in\, \mathbb{R}_{\ge 0}^{|E|}$ whose components $b_\epsilon$, for $\epsilon \,\in\, E$, are nonnegative and satisfy:
\[
\sum_{\epsilon \,\ni\, v} b_\epsilon \;\le\; 1
\quad \text{for each vertex }v \,\in\, V.
\]
\textcolor{black}{This constraint ensures that the total weight assigned to hyperedges incident on any given vertex does not exceed 1. In the context of our recovery hypergraph, this directly corresponds to constraint~\eqref{eq:capacity_constraint}, which enforces that the total request load routed through any server does not exceed its capacity (normalized to \( \mu = 1 \)). In Example~\ref{ex:recovery_hypergraph}, a valid matching $\boldsymbol{b}$ is illustrated on the recovery graph of $\mathbf{G}_{\mathrm{RM(1, 2)}}$ by assigning specific weights to each edge (recovery set).}
    
The set of all fractional matchings in $\Gamma_{\mathbf{G}}=(V,E)$ forms a polytope in $\mathbb{R}_{\ge 0}^{|E|}$, called the \emph{fractional matching polytope}, denoted $\mathrm{FMP}(\Gamma_{\mathbf{G}})$. 

\begin{definition}
Consider a system employing an \([n, k]\) code with generator matrix \( \mathbf{G} \) and uniform server availability, i.e., \( \boldsymbol{\mu} = \boldsymbol{1}_n \). The \emph{service rate} for \( \mathbf{e}_j \) under a specific fractional matching \( \boldsymbol{b} \), denoted \( \lambda_j(\boldsymbol{b}) \), is the sum of the weights \( b_\epsilon \) over all hyperedges \( \epsilon \) labeled by \( \mathbf{e}_j \). The corresponding \emph{service vector} is
\[
\boldsymbol{\lambda}(\boldsymbol{b}) = (\lambda_1(\boldsymbol{b}), \dots, \lambda_k(\boldsymbol{b})),
\]
that is, the vector of service rates for all message symbols \( \mathbf{e}_j \), \( j \in [k] \). Each matching \( \boldsymbol{b} \) defines a valid allocation for \( \boldsymbol{\lambda} \).
\end{definition}
The following result establishes the relationship between achievable service vectors in a storage system and fractional matchings in its associated recovery hypergraph.
\begin{proposition}(\hspace{-0.1mm}\cite{SRR:journals/tit/AktasJKKS21}, Proposition 1)
\( \boldsymbol{\lambda} = (\lambda_1, \dots, \lambda_k) \) is achievable (i.e., satisfying constraints~\eqref{eq:SRR_1}\text{--}\eqref{eq:SRR_2}) if and only if there exists a fractional matching \( \boldsymbol{b} \) in the recovery graph \( \Gamma_{\mathbf{G}} \) such that
\[
\boldsymbol{\lambda} = \boldsymbol{\lambda}(\boldsymbol{b}).
\]
\label{prop:demand_to_matching}
\end{proposition}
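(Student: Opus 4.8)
The plan is to establish both implications at once by exhibiting the natural ``identity'' correspondence between the allocation variables $(\lambda_{i,j})$ that certify achievability of $\boldsymbol{\lambda}$ and the edge-weight vectors $\boldsymbol{w}$ that certify membership in $\mathrm{FMP}(\Gamma_{\mathbf{G}})$, and then checking that the three defining constraints of the service polytope, namely (1), \eqref{eq:capacity_constraint}, and the nonnegativity constraint (3), translate one at a time into the fractional-matching inequalities together with the equality $\boldsymbol{\lambda}=\boldsymbol{\lambda}(\boldsymbol{w})$. No estimation or optimization is involved; the content is essentially a dictionary between the two descriptions, so I expect a short argument.

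First I would fix the correspondence precisely. By the construction of $\Gamma_{\mathbf{G}}$, the hyperedges carrying label $\mathbf{e}_i$ are in one-to-one (multiset) correspondence with the recovery sets $\mathcal{R}_i=\{R_{i,1},\dots,R_{i,t_i}\}$ of $o_i$: write $\epsilon_{i,j}$ for the edge associated with $R_{i,j}$, whose vertex set is $\{v_l : l\in R_{i,j}\}$, augmented by an auxiliary vertex $\mathbf{0}_k$ (of degree one) when $|R_{i,j}|=1$. Since distinct pairs $(i,j)$ index distinct members of the edge multiset $E$ --- if $i=i'$ because $R_{i,j}\ne R_{i,j'}$, and if $i\ne i'$ because the labels differ --- the rule $w_{\epsilon_{i,j}}:=\lambda_{i,j}$ defines a bijection between nonnegative tuples $(\lambda_{i,j})_{i\in[k],\,j\in[t_i]}$ and weight vectors $\boldsymbol{w}\in\mathbb{R}_{\ge 0}^{|E|}$. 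Under this bijection I would check: (3) becomes literally $\boldsymbol{w}\ge\boldsymbol{0}_{|E|}$; because the edges labeled $\mathbf{e}_i$ are exactly $\epsilon_{i,1},\dots,\epsilon_{i,t_i}$, one has $\lambda_i(\boldsymbol{w})=\sum_{j=1}^{t_i}w_{\epsilon_{i,j}}=\sum_{j=1}^{t_i}\lambda_{i,j}$, so (1) is exactly $\boldsymbol{\lambda}=\boldsymbol{\lambda}(\boldsymbol{w})$; and for each server vertex $v_l$, the edges incident to $v_l$ are precisely the $\epsilon_{i,j}$ with $l\in R_{i,j}$, hence $\sum_{\epsilon\ni v_l}w_\epsilon=\sum_{i=1}^{k}\sum_{j:\,l\in R_{i,j}}\lambda_{i,j}$, which, since $\mu_l=1$, is exactly \eqref{eq:capacity_constraint} at $v_l$. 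Reading these equivalences forward gives the ``only if'' direction and backward the ``if'' direction.

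The single point that needs care --- and the only real obstacle --- is that a fractional matching must satisfy an inequality at \emph{every} vertex of $\Gamma_{\mathbf{G}}$, including the auxiliary vertices $\mathbf{0}_k$, for which the service-polytope LP has no constraint of its own. I would handle this by noting that such a vertex is incident only to its single defining edge $\epsilon_{i,j}$ with $R_{i,j}=\{l\}$, so its matching inequality reads $w_{\epsilon_{i,j}}=\lambda_{i,j}\le 1$; but $\lambda_{i,j}$ appears as one of the nonnegative summands on the left-hand side of \eqref{eq:capacity_constraint} at the genuine server vertex $v_l$, whose right-hand side is $\mu_l=1$, so this inequality is automatically implied and adjoining the auxiliary vertices does not change the polytope. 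With that observation the constraint-by-constraint translation above is complete, which proves the equivalence.
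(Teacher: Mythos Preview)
The paper does not supply its own proof of this proposition; it is imported verbatim from~\cite{SRR:journals/tit/AktasJKKS21} and used as a black box. Your argument is the natural one and is correct: the identification $w_{\epsilon_{i,j}}\leftrightarrow\lambda_{i,j}$ turns constraints (1), \eqref{eq:capacity_constraint}, (3) into, respectively, the identity $\boldsymbol{\lambda}=\boldsymbol{\lambda}(\boldsymbol{w})$, the vertex inequalities at the server vertices, and $\boldsymbol{w}\ge\boldsymbol{0}$.

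One small point of care: the paper's description of the auxiliary vertex $\mathbf{0}_k$ is terse enough to be read as a \emph{single} shared vertex rather than one fresh copy per singleton recovery set. Your argument that the extra inequality at an auxiliary vertex is redundant relies on that vertex having degree one. If instead all singleton edges share the same $\mathbf{0}_k$, the matching constraint there becomes $\sum_{(i,j):|R_{i,j}|=1}\lambda_{i,j}\le 1$, which is \emph{not} implied by \eqref{eq:capacity_constraint} in general (think of a systematic $[n,k]$ code with $k\ge 2$), and the equivalence would fail in the stated form. For the Reed--Muller setting of this paper the issue is moot because only $c_1=\mathbf{e}_1$ is a unit column, but if you want your proof to stand for arbitrary $\mathbf{G}$ you should make explicit that each singleton recovery set receives its own auxiliary vertex (equivalently, that auxiliary vertices impose no constraint). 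With that convention stated, your proof is complete.
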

By Proposition~\ref{prop:demand_to_matching}, the set of all service vectors $\boldsymbol{\lambda}$ for which corresponding matchings exist in  $\mathrm{FMP}(\Gamma_{\mathbf{G}})$ forms a polytope in $\mathbb{R}_{\ge 0}^k$ known as the service rate region $\srr$. We therefore use the terms service polytope and service rate region interchangeably.


The \emph{size} of a matching $\boldsymbol{b} \in\, \mathbb{R}^{|E|}$ is defined as $\sum\limits_{\epsilon \,\in\, E} b_{\epsilon}$, i.e., the sum of the weights of all its edges. The size of the matching $\boldsymbol{b}$ in Example~\ref{ex:recovery_hypergraph} is $2$. The \emph{matching number} $\nu^*(\Gamma_{\mathbf{G}})$ is the maximum matching size:
\[
\nu^*(\Gamma_{\mathbf{G}}) \;=\; \max_{\boldsymbol{b} \,\in\, \mathrm{FMP}(\Gamma_{\mathbf{G}})}\sum_{\epsilon \,\,\in\,\, E} b_{\epsilon}.
\]
\textcolor{black}{Intuitively, the size of a matching quantifies the total weight distributed across all hyperedges, and the matching number captures the largest such total achievable under vertex capacity constraints.} A \emph{fractional vertex cover} of $(V,E)$ is a vector $\boldsymbol{w} \,\in\, \mathbb{R}^{|V|}$ with nonnegative components $w_v$ such that $\sum_{v \,\in\, \epsilon} w_v \ge 1$ for every edge $\epsilon \,\in\, E$. Its \emph{size} is $\sum\limits_{v \,\in\, V} w_v$. The \emph{vertex cover number} $\tau^*(\Gamma_{\mathbf{G}})$ is the minimum size of any fractional vertex cover:
\[
\tau^*(\Gamma_{\mathbf{G}})
\;=\;
\min_{}
\Bigl\{\,
\sum_{v \,\in\, V} w_v : \boldsymbol{w} \text{ is a fractional vertex cover}
\Bigr\}.
\]
\textcolor{black}{Intuitively, a fractional vertex cover assigns weights to vertices such that every hyperedge accumulates a total weight of at least one.} The size of a cover is the sum of these vertex weights, and the vertex cover number $\tau^*(\Gamma_{\mathbf{G}})$ is the minimum size achievable over all valid covers. The computation of the maximum matching number $\nu^*(\Gamma_{\mathbf{G}})$ is a linear program whose dual problem corresponds to finding $\tau^*(\Gamma_{\mathbf{G}})$. By the theory of linear programming duality~\cite{Linear_programming}, we establish the following properties:
\begin{itemize}
    \item \textbf{Weak Duality:} The size of \emph{any} fractional vertex cover is an upper bound on the size of \emph{any} fractional matching.
    \item \textbf{Strong Duality:} The optimal values coincide, i.e., $\nu^*(\Gamma_{\mathbf{G}}) = \tau^*(\Gamma_{\mathbf{G}})$.
\end{itemize}

Example~\ref{ex:recovery_hypergraph} shows a graph whose both fractional matching number and vertex cover number are $2$.

Framing the SRR problem in terms of graph theory not only reveals its inherent structure but also allows us to leverage established results from the literature. The former is demonstrated in Proposition~\ref{prop:demand_to_matching}, while the latter can be seen from the next result.

\begin{proposition}(\hspace{-0.1mm}\cite{SRR:lySV2025})\label{prop:sum_bound}
For any vector $\boldsymbol{\lambda} = (\lambda_1, \lambda_2, \dots, \lambda_k)$ in the service region $\srr$,
\begin{align}\label{eq:Duality_bound}
\sum_{j=1}^k \lambda_j \;\le\; \nu^*(\Gamma_{\mathbf{G}}) = \tau^*(\Gamma_{\mathbf{G}}).
\end{align}
Moreover, if $I$ is any subset of $[k]$ and $\Gamma'$ is the $I$-induced subgraph of $\Gamma_{\mathbf{G}}$, then also
\[
\sum_{j \,\in\, I} \lambda_j \;\le\ \nu^*(\Gamma') = \tau^*(\Gamma').
\]
\end{proposition}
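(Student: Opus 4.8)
The plan is to convert a demand vector into a fractional matching via Proposition~\ref{prop:demand_to_matching}, identify the sum of its coordinates with the size of that matching, bound this by the matching number $\nu^*$, and then invoke the strong LP duality $\nu^* = \tau^*$ recorded just above the statement. For the second part I would rerun the same argument inside the $I$-induced subgraph.

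Concretely, I would first fix $\boldsymbol{\lambda} \in \srr$ and use Proposition~\ref{prop:demand_to_matching} to obtain a fractional matching $\boldsymbol{w} \in \mathrm{FMP}(\Gamma_{\mathbf{G}})$ with $\boldsymbol{\lambda} = \boldsymbol{\lambda}(\boldsymbol{w})$, so that $\lambda_j = \sum_{\epsilon \text{ labeled } \mathbf{e}_j} w_\epsilon$ for every $j \in [k]$. The key bookkeeping point is that, by the construction of $\Gamma_{\mathbf{G}}$, each hyperedge carries exactly one label $\mathbf{e}_j$; hence the families of edges labeled $\mathbf{e}_1, \dots, \mathbf{e}_k$ partition $E$, and
\[
\sum_{j=1}^k \lambda_j \;=\; \sum_{j=1}^k \sum_{\epsilon \text{ labeled } \mathbf{e}_j} w_\epsilon \;=\; \sum_{\epsilon \in E} w_\epsilon ,
\]
which is precisely the size of $\boldsymbol{w}$. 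By definition of $\nu^*(\Gamma_{\mathbf{G}})$ this is at most $\nu^*(\Gamma_{\mathbf{G}})$, and $\nu^*(\Gamma_{\mathbf{G}}) = \tau^*(\Gamma_{\mathbf{G}})$ is the strong-duality identity already stated, giving \eqref{eq:Duality_bound}.

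For the refined bound, let $I \subseteq [k]$, let $\Gamma'$ be the $I$-induced subgraph with edge set $E'$ (the hyperedges labeled $\mathbf{e}_j$, $j \in I$) and vertex set $V'$ (the vertices incident to those edges), and let $\boldsymbol{w}'$ be the restriction of $\boldsymbol{w}$ to $E'$. For each $v \in V'$ we have $\sum_{\epsilon \in E':\, v \in \epsilon} w'_\epsilon \le \sum_{\epsilon \in E:\, v \in \epsilon} w_\epsilon \le 1$ since we only discarded nonnegative terms, so $\boldsymbol{w}' \in \mathrm{FMP}(\Gamma')$. Repeating the computation above inside $\Gamma'$ yields $\sum_{j \in I} \lambda_j = \sum_{\epsilon \in E'} w'_\epsilon \le \nu^*(\Gamma') = \tau^*(\Gamma')$ by strong duality for $\Gamma'$; and since $\Gamma'$ is a sub-hypergraph of $\Gamma_{\mathbf{G}}$ one also has $\nu^*(\Gamma') \le \nu^*(\Gamma_{\mathbf{G}})$, consistent with the form displayed in the statement.

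I do not expect a genuine obstacle: the proof is a direct composition of Proposition~\ref{prop:demand_to_matching} with linear-programming duality. The two points that warrant care are (i) the "exactly one label per hyperedge" property, which is exactly what makes $\sum_j \lambda_j(\boldsymbol{w})$ equal the \emph{entire} matching size rather than an overcount, and (ii) verifying that restricting a fractional matching to an induced sub-hypergraph is still a fractional matching; both follow immediately from the definitions, and (i) is the only place where the specific construction of $\Gamma_{\mathbf{G}}$ is used.
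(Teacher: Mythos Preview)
Your argument is correct and is essentially the canonical one: pass from an achievable $\boldsymbol{\lambda}$ to a fractional matching via Proposition~\ref{prop:demand_to_matching}, observe that the edge labels partition $E$ so that $\sum_j \lambda_j$ equals the matching size, and then apply LP duality; the restriction step for the $I$-induced subgraph is handled cleanly. Note, however, that the paper does not actually supply its own proof of this proposition---it is quoted from~\cite{SRR:lySV2025}---so there is no in-paper argument to compare against; your proof is precisely the short derivation one would expect behind the cited result.
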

The proposition above establishes that the size of any (fractional) vertex cover serves as an upper bound on the sum rate of any achievable vector \( \boldsymbol{\lambda} \). We will later show that in scenarios where most hyperedges have large cardinality and exhibit complex overlap—as perfectly exemplified by Reed--Muller codes—this bound provides a simple yet tight estimate of the achievable sum rate. Importantly, it allows us to bypass the complicated task of analyzing edge matchings by working instead with vertex covers. We illustrate all the mentioned concepts in the following example.

\begin{exmp}\label{ex:recovery_hypergraph}
In this example we consider the first-order RM code \( \mathrm{RM}(1, 2) \) of length \( 2^2 = 4 \), which contains 8 codewords of the form:
\(
a_0 \boldsymbol{1} + a_2 \boldsymbol{v}_2 + a_1 \boldsymbol{v}_1, \, \text{where } a_i \,\in\, \mathbb{F}_2,
\)
and the vectors $\boldsymbol{1}, \boldsymbol{v}_2, \boldsymbol{v}_1$ are given as rows of its corresponding \emph{generator matrix} 
\begin{align}
\label{eq:RM12_generator}
\mathbf{G}_{\mathrm{RM}(1,\,2)} = 
\arraycolsep=3pt\def\arraystretch{0.8}
\begin{bmatrix}
\boldsymbol{1} \\
\boldsymbol{v}_2 \\
\boldsymbol{v}_1 \\
\end{bmatrix}
=
\arraycolsep=3pt\def\arraystretch{0.8}
\begin{bmatrix}
1 & 1 & 1 & 1\\
0 & 0 & 1 & 1\\
0 & 1 & 0 & 1\\
\end{bmatrix},
\end{align}
which encodes the message $\boldsymbol{a} = [a_0, a_2, a_1]$ into $\boldsymbol{x} = [x_1, x_2, x_3, x_4] = \boldsymbol{a}\cdot\mathbf{G}_{\mathrm{RM}(1,\,2)}$. In $\mathbf{G}_{\mathrm{RM}(1,\,2)}$, the two rows \(\bigl[\begin{smallmatrix} \boldsymbol{v}_2\\\boldsymbol{v}_1 \end{smallmatrix}\bigr]\) together form all binary vectors of length 2 (there are $2^2 = 4$ such vectors). (A more detailed definition of RM codes is given in Section~\ref{sec:RM_codes}.) We construct the recovery graph $\Gamma$ of the matrix $\mathbf{G}_{\mathrm{RM}(1,\,2)}$ which is given in~\eqref{eq:RM12_generator}. Observe the following linear dependencies among the columns:
\[
\mathbf{e}_1 = c_1 = c_2 + c_3 + c_4, 
\quad
\mathbf{e}_2 = c_1 + c_3 = c_2 + c_4, 
\quad
\mathbf{e}_3 = c_1 + c_2 = c_3 + c_4.
\]
Consequently, the recovery hypergraph is constructed as shown in the left plot of Fig.~\ref{fig:recovery_graph}, where the vertex $i$ corresponds to the column $c_i$ for $i = 1, \dots, 4$. Although the recovery sets for $\mathbf{e}_2$ and $\mathbf{e}_3$ have a uniform size of 2, the recovery sets for $\mathbf{e}_1$ vary in size. Denote as $R$ the recovery set for $\mathbf{e}_1$ of size 3 consisting of vertices $2,\, 3,\, 4$, then $\boldsymbol{\chi}(R) = 0111$. Since column $c_1$ individually forms a recovery set for $\mathbf{e}_1$, we introduce an auxiliary vertex labeled $\mathbf{0}_3$ connected to vertex 1; the resulting edge represents this singleton recovery set, thus avoiding self-loops in the graph.

\textcolor{black}{To determine the service capacity, consider the fractional matching and vertex cover. A valid fractional matching $\boldsymbol{b}$ of size $2$ can be obtained by assigning weights of 0.5 to two specific edges and 0.25 to the remaining four edges ($0.5\cdot2 + 0.25\cdot4 = 2$). Conversely, a valid vertex cover of size 2 is obtained by assigning weights $w_1 = w_4 = 1$ to vertices 1 and 4, and 0 to the others (note that no edge can be formed by vertices $2$ and $3$ only). Since the matching size equals the vertex cover size, we have $\nu^*(\Gamma) = \tau^*(\Gamma) = 2$. By Proposition~\ref{prop:sum_bound}, this implies that any achievable request vector $\boldsymbol{\lambda} = (\lambda_1, \lambda_2, \lambda_3)$ must satisfy $\sum_{i=1}^3 \lambda_i \le 2$.}

The right plot of Fig.~\ref{fig:recovery_graph} illustrates the $\{3\}$-induced subgraph of $\Gamma$. A valid vertex cover of size 2 for this subgraph is easily found by setting $w_1 = w_3 = 1$ and $w_2 = w_4 = 0$. Thus, by Proposition~\ref{prop:sum_bound}, we derive the bound $\lambda_3 \leq 2$. Performing a similar job on the $\{1\}$-induced and $\{2\}$-induced subgraphs of $\Gamma$, we obtain the bounds $\lambda_1 \leq 2$ and $\lambda_2 \leq 2$.
\begin{figure}[ht]
\centering
\begin{tikzpicture}[scale=0.9, font=\small, line cap=round, line join=round, line width = 1.15pt,
                    background rectangle/.style={fill=none},
                    show background rectangle]

  \foreach \i in {1} {
    \node[circle,draw,fill=gray!30,minimum size=0.65cm,inner sep=2pt] (A\i) at (0,0) {\small 1};
  }
  \coordinate (E) at (0,-1.5);

  \coordinate (B) at (3,3);
  \coordinate (C) at (6,0);
  \coordinate (D) at (3,-3);

  \draw[fill=gray!30] (B) circle (9pt) node {\strut 2};
  \draw[fill=gray!30] (C) circle (9pt) node {\strut 3};
  \draw[fill=gray!30] (D) circle (9pt) node {\strut 4};
  \draw[fill=gray!30] (E) circle (9pt) node {\strut $\mathbf{0}_3$};

  \begin{scope}[on background layer]
    \fill[olive!30] (D) -- (B) -- (C) -- cycle;
    \node[color=olive]
      at (barycentric cs:B=1,C=1,D=0.75) {$\mathbf{e}_1\ \tikz[baseline=-0.5ex]\node[fill=black!15, draw=none, inner sep=2pt]{0.25};$};

    \draw[color=orange,thick] (A1) to[bend left=20] node[midway, yshift=7pt, xshift=-15pt]
  {$\tikz[baseline=-0.5ex]\node[fill=black!15, draw=none, inner sep=2pt]{0.25};\ \mathbf{e}_3$} (B);

    \draw[color=orange,thick] (C) to[bend left=20] node[midway, yshift=-5pt, xshift=15pt] {$\mathbf{e}_3\ \tikz[baseline=-0.5ex]\node[fill=black!15, draw=none, inner sep=2pt]{0.25};$}  (D);

    \draw[color=blue,thick] (A1) to[bend right=20] 
    node[
        midway, 
        align=center, 
        yshift=9pt, 
        xshift=-48pt
    ] 
    {
        \tikz[baseline=-0.5ex]\node[fill=black!15, draw=none, inner sep=2pt]{0.25}; \\ 
        $\mathbf{e}_2$
    } (C);
    
    \draw[color=blue,thick] (B) to[bend right=25] node[midway, yshift=28pt, xshift=7pt] {$\mathbf{e}_2\ \tikz[baseline=-0.5ex]\node[fill=black!15, draw=none, inner sep=2pt]{0.5};$} (D);

      \draw[color=olive,thick] (A1) to[] node[midway, yshift=4pt, xshift=-17pt] {$\tikz[baseline=-0.5ex]\node[fill=black!15, draw=none, inner sep=2pt]{0.5}; \mathbf{e}_1$} (E);
  \end{scope}
\end{tikzpicture}
\hspace{2cm} 
\begin{tikzpicture}[scale=0.9, font=\small, line cap=round, line join=round, 
                    background rectangle/.style={fill=none}, 
                    show background rectangle]
  \foreach \i in {1} {
    \node[circle,draw,fill=gray!30,minimum size=0.65cm,inner sep=2pt] (A\i) at (0,0) {\small 1};
  }

  \coordinate (B) at (3,3);
  \coordinate (C) at (6,0);
  \coordinate (D) at (3,-3);

  \draw[fill=gray!30] (B) circle (9pt) node {\strut 2};
  \draw[fill=gray!30] (C) circle (9pt) node {\strut 3};
  \draw[fill=gray!30] (D) circle (9pt) node {\strut 4};

  \begin{scope}[on background layer]
l
    \draw[color=orange,thick] (A1) to[bend left=20] node[midway, yshift=7pt, xshift=-4pt] {$\mathbf{e}_3$} (B);
    \draw[color=orange,thick] (C) to[bend left=20] node[midway, yshift=-5pt, xshift=6pt] {$\mathbf{e}_3$} (D);


      node[midway, above] {$\mathbf{e}_1$} (A1);
  \end{scope}
\end{tikzpicture}

\caption{(Left) The recovery hypergraph of $\mathbf{G}_{\mathrm{RM}(1, 2)}$. The singleton recovery set for $\mathbf{e}_1$ is represented by the edge connecting vertex 1 to the auxiliary vertex $\boldsymbol{0}_3$, while the recovery set of size 3 forms the olive-colored triangle connecting vertices 2, 3, and 4. A fractional matching $\boldsymbol{b}$ is illustrated, with the weight of each edge displayed in the adjacent box. Notably, the total request (sum of incident edge weights) assigned on vertices 1, 2, and 4 reaches the unit capacity of 1, whereas vertex 3 supports a load of 0.75. (Right) The corresponding $\{3\}$-induced subgraph.}
  \label{fig:recovery_graph}
  \end{figure}
\end{exmp}

\subsection{Summary of Results} 
\textcolor{black}{
Our primary objective is to characterize the Service Rate Region $\mathcal{S}(\mathbf{G}, \mathbf{1})$ for Reed--Muller codes $\mathrm{RM}(r, \, m)$ under uniform server capacity. While the SRR for Simplex codes (which are essentially first-order RM codes) and MDS codes are well-understood, such an characterization for SRR of RM codes of general parameters remains unknown. This paper presents a systematic analysis of the SRR for RM codes with arbitrary parameters, establishing the following key results:
\begin{itemize}
    \item \textbf{Geometric Characterization of Recovery Sets (Section \ref{sec:recovery}):} Leveraging the connection between RM codes and finite Euclidean geometry, we characterize the structure of recovery sets for data objects of arbitrary order $\ell$. We prove that for every message symbol, there exists a \emph{unique smallest recovery set} of size $2^\ell$ corresponding to a specific $\ell$-dimensional subspace (Theorems~\ref{thm:RecoverySet} and~\ref{thm:RecoverySetSize}). We further identify and enumerate the ``next-smallest'' recovery sets, showing they correspond to minimum-weight codewords in the dual code (Theorem~\ref{thm:RecoverySetCount}). These results generalize Reed's classical decoding algorithm by providing a framework for the \emph{direct, parallel recovery} of all message symbols, independent of their associated order (Remark~\ref{rm:generalization_Reed}).
     \item \textbf{Connection to Open Combinatorial Problems:} Moreover, we demonstrate that for each message symbol, characterizing all of its recovery sets is equivalent to solving the \emph{coordinate-constrained weight enumerator problem} for the dual RM code (Remark~\ref{rm:Connection_Dual}). As this remains an open problem in coding theory, a complete, exact characterization of the SRR is hard except for special cases (e.g., first-order RM codes). This suggests us to approximate the SRR via its bounding simplices to offer the best characterization of the SRR possible. 
    \item \textbf{Closed-Form Bounds on Maximal Demands (Section \ref{sec:Service_rate}):} Using the geometric properties of these recovery sets, we derive explicit, tight bounds on the \emph{maximal achievable demand} $\lambda_j^{\max}$ for any individual data object $j$ (Theorem~\ref{thm:maximal_achievable_simplex}). We show that objects associated with higher-order message symbols support higher maximal service rates. Additionally, we provide a tight bound on the aggregate service rate for all data objects of the same order (Theorem~\ref{thm:sum_bound_same_order}), and up to a certain order (Theorem~\ref{thm:total_sum_bound}).
    \item \textbf{Tight Approximation via Bounding Simplices (Section \ref{sec:Service_rate}):} Due to the combinatorial complexity of the full SRR polytope, we approximate the region using two bounding simplices. We define the \emph{maximal achievable simplex} $\mathcal{A}$ (an inner bound where all rate vectors are achievable) and construct a minimal \emph{enclosing simplex} $\Omega$ (an outer bound) based on sum-rate constraints (Corollary~\ref{corollary:enclosing_simplex}). A key contribution of this work is proving that the enclosing simplex is \emph{at most a factor of 2 larger} than the achievable simplex, regardless of the code parameters (Corollary~\ref{corollary:enclosing_simplex} and Example~\ref{ex:MDS_RM_compare}). This provides a much tighter approximation of the SRR compared to MDS codes, where the gap can be as large as a factor $k$.
\end{itemize}}

\section{Reed--Muller Codes Preliminaries}\label{sec:RM_codes}
In this section, we introduce Reed--Muller (RM) codes, explore their relationship with Euclidean geometry, and discuss the Reed decoding algorithm of message symbols. This will lay the foundation for characterizing message symbol recovery sets in Section~\ref{sec:recovery}. A table summarizing the important notation is provided at the end of the section.

\subsection{Reed--Muller Codes}
We begin by defining Reed--Muller codes using standard notation from~\cite{Coding:books/MacWilliamsS77}. Let \( t_1, t_2, \dots, t_m \,\in\, \mathbb{F}_2 \) be \( m \) binary variables, and let \( \boldsymbol{t} = (t_1, t_2, \dots, t_m) \) represent the binary \( m \)-tuples (there are \( 2^m \) such tuples in total). Consider a $m$-variate Boolean function \( f(\boldsymbol{t}) = f(t_1, t_2, \dots, t_m) \) that outputs 0 or 1. The vector \( \boldsymbol{f} \) of length \( 2^m \) is derived by evaluating \( f \) for all $2^m$ possible input vectors \( \boldsymbol{t} \).


\begin{definition}
\label{def:RM_code}
The \( r \)-th order (for \( 0 \leq r \leq m \)) binary Reed--Muller code of length \( n = 2^m \), denoted as \( \mathrm{RM}(r, m) \), consists of all vectors \( \boldsymbol{f} \) where \( f(\boldsymbol{t}) \) is a Boolean function that can be expressed as a polynomial of degree at most \( r \). 
\end{definition}

{\color{black} To illustrate, consider the second-order RM code \( \mathrm{RM}(2, 4) \) of length \( 2^4 = 16 \), which contains $2^{11}$ codewords of the form:
\begin{align}\label{eq:message_to_codeword} 
\boldsymbol{x} = \boldsymbol{a} \cdot \mathbf{G}_{\mathrm{RM}(2,4)} = a_0 \boldsymbol{1} + \sum_{i=1}^4 a_i \boldsymbol{v}_i + \sum_{1 \le i < j \le 4} a_{ij} \boldsymbol{v}_i \boldsymbol{v}_j.
\end{align}
where the vectors $\boldsymbol{1}, \{\boldsymbol{v}_i, 1\leq i \leq 4\}$ and $ 
\{\boldsymbol{v}_i\boldsymbol{v}_j, 1\leq i < j\leq 4 \}$, are given as rows of its corresponding \emph{generator matrix} 
\begin{align}\label{eq:RM_generator}
\mathbf{G}_{\mathrm{RM}(2,4)} = 
\arraycolsep=3pt\def\arraystretch{0.8}
\begin{bmatrix}
\boldsymbol{1}\\
\boldsymbol{v}_4\\
\boldsymbol{v}_3\\
\boldsymbol{v}_2\\
\boldsymbol{v}_1\\
\boldsymbol{v}_3\boldsymbol{v}_4\\
\boldsymbol{v}_2\boldsymbol{v}_4\\
\boldsymbol{v}_1\boldsymbol{v}_4\\
\boldsymbol{v}_2\boldsymbol{v}_3\\
\boldsymbol{v}_1\boldsymbol{v}_3\\
\boldsymbol{v}_1\boldsymbol{v}_2
\end{bmatrix}
=
\left[
\begin{array}{cccccccccccccccc}
1 & 1 & 1 & 1 & 1 & 1 & 1 & 1 & 1 & 1 & 1 & 1 & 1 & 1 & 1 & 1\\
0 & 0 & 0 & 0 & 0 & 0 & 0 & 0 & 1 & 1 & 1 & 1 & 1 & 1 & 1 & 1\\
0 & 0 & 0 & 0 & 1 & 1 & 1 & 1 & 0 & 0 & 0 & 0 & 1 & 1 & 1 & 1\\
0 & 0 & 1 & 1 & 0 & 0 & 1 & 1 & 0 & 0 & 1 & 1 & 0 & 0 & 1 & 1\\
0 & 1 & 0 & 1 & 0 & 1 & 0 & 1 & 0 & 1 & 0 & 1 & 0 & 1 & 0 & 1\\
0 & 0 & 0 & 0 & 0 & 0 & 0 & 0 & 0 & 0 & 0 & 0 & 1 & 1 & 1 & 1\\
0 & 0 & 0 & 0 & 0 & 0 & 0 & 0 & 0 & 0 & 1 & 1 & 0 & 0 & 1 & 1\\
0 & 0 & 0 & 0 & 0 & 0 & 0 & 0 & 0 & 1 & 0 & 1 & 0 & 1 & 0 & 1\\
0 & 0 & 0 & 0 & 0 & 0 & 1 & 1 & 0 & 0 & 0 & 0 & 0 & 0 & 1 & 1\\
0 & 0 & 0 & 0 & 0 & 1 & 0 & 1 & 0 & 0 & 0 & 0 & 0 & 1 & 0 & 1\\
0 & 0 & 0 & 1 & 0 & 0 & 0 & 1 & 0 & 0 & 0 & 1 & 0 & 0 & 0 & 1
\end{array}
\right],
\end{align}
} 
where $\boldsymbol{v}_i\boldsymbol{v}_j$ is formed by taking the element-wise product of $\boldsymbol{v}_i$ and $\boldsymbol{v}_j$. In general, the Reed--Muller code \( \mathrm{RM}(r, m) \) is a linear code with length \( n = 2^m \), dimension \( k = \binom{m}{\le r} := \sum\limits_{i=0}^r \binom{m}{i}\), and minimum distance \( d = 2^{m-r} \). It is thus characterized by the code parameters \( [n, k, d] = [2^m, \binom{m}{\le r}, 2^{m-r}] \). When $m \ge r+1$, the dual code of $\mathrm{RM}(r, \,m)$ is $\mathrm{RM}(m-r-1,\, m)$~\cite{Coding:books/MacWilliamsS77}. Throughout this work, we assume $m \ge r+1$, ensuring that the dual code $\mathrm{RM}(m-r-1,\, m)$ is always well-defined.

\subsection{Geometric Interpretation}
Many properties of Reed--Muller codes are elegantly described using finite geometry. In particular, we work within the framework of \textit{Euclidean geometry} \( \mathrm{EG}(m, 2) \) of dimension $m$ over $\mathbb{F}_2$, also known as binary affine geometry of dimension \( m \). This space consists of \( 2^m \) points \( P_i \) for \( i = 1, \dots, 2^m \), each identified with a binary coordinate vector \( \boldsymbol{t} = (t_1, \dots, t_m) \in \mathbb{F}_2^m \). Throughout this paper, we assume the point ordering follows the convention illustrated for \( \mathrm{EG}(4, 2) \) in Table~\ref{table:points}. \textcolor{black}{Structurally, the sequence of points \( P_1, \dots, P_{2^m} \) corresponds to the columns of the matrix formed by vertically stacking the basis vectors \( \boldsymbol{v}_m, \boldsymbol{v}_{m-1}, \dots, \boldsymbol{v}_1 \), where \( \boldsymbol{v}_m \) constitutes the first row and \( \boldsymbol{v}_1 \) the last.} Within this geometry, an \emph{$r$-flat} is defined as an affine subspace of dimension \( r \) containing \( 2^r \) points. Formally, an $r$-flat is a coset expressed as \( \boldsymbol{u} + V \), where \( V \) is an \( r \)-dimensional linear subspace of \( \mathbb{F}_2^m \) and \( \boldsymbol{u} \in \mathbb{F}_2^m \) is a shift vector. While flats need not pass through the origin (the all-zero point), those that do are linear subspaces. The number of such \( r \)-dimensional linear subspaces---or equivalently the cardinality of the \emph{Grassmannian} \( \mathcal{G}_2(m, r) \)---is given by the Gaussian binomial coefficient \( \gaussbinom{m}{r} = \prod_{i=0}^{r-1}\frac{1-2^{m-i}}{1-2^{i+1}} \) (with \( \gaussbinom{m}{0} := 1 \))~\cite{NetworkCode:book/MarcusONA}. These flats generalize the geometric concepts of points, lines, and planes to the finite field setting.

For any subset \( S \) of the points in the Euclidean geometry \( \mathrm{EG}(m, 2) \), we define its \emph{incidence vector} \( \boldsymbol{\chi}(S) \) as a binary vector of length \( 2^m \) with entries:
\[
\boldsymbol{\chi}(S)_i = 
\begin{cases}
1 & \text{if } P_i \,\in\, S, \\
0 & \text{otherwise},
\end{cases}
\]
where \( P_i \) denotes the \( i \)-th point in \( \mathrm{EG}(m, 2) \). This perspective allows us to interpret codewords of \( \mathrm{RM}(r, m) \) as incidence vectors of specific geometric structures within \( \mathrm{EG}(m, 2) \). To illustrate, consider \( \mathrm{EG}(4, 2) \) with points ordered as in Table~\ref{table:points}. The vector \( \boldsymbol{x} = 0000000011111111 \) is a valid codeword in \( \mathrm{RM}(2, 4) \) and corresponds to the subset \( S(\boldsymbol{x}) =  \{P_9, P_{10}, P_{11}, P_{12}, P_{13}, P_{14}, P_{15}, P_{16}\} \). By inspecting the table, we observe that \( S(\boldsymbol{x}) \) consists precisely of the points satisfying the equation \( t_1 = 1 \); thus, it represents an affine flat. In general, the row $\boldsymbol{v}_i$ ($1\le i \le m$) of $\mathbf{G}$ in~\eqref{eq:RM_generator} corresponds to the variable $t_{m+1-i}$, i.e., if $\boldsymbol{v}_i = \boldsymbol{\chi}(S(\boldsymbol{v}_i))$, then $S(\boldsymbol{v}_i)$ is the flat satisfying $t_{m+1-i} = 1$. (See Example~\ref{ex:flat_example}.) Also, the cardinality of a subset \( S(\boldsymbol{x}) \) is simply its Hamming weight \( \text{wt}(\boldsymbol{x}) \). This geometric connection is formalized in the following theorem.

\begin{table*}[t]
\caption{16 Points in $\mathrm{EG}(4, \,2)$.} 
\label{table:points} 
\begin{center}
\begin{tabular}{c c c c c c c c c c c c c c c c c}
   & $P_1$ & $P_2$ & $P_3$ & $P_4$ & $P_5$ & $P_6$ & $P_7$& $P_8$ & $P_9$ & $P_{10}$ & $P_{11}$ & $P_{12}$ & $P_{13}$ & $P_{14}$ & $P_{15}$ & $P_{16}$\\
 $\boldsymbol{v}_4,\,\, \, t_1$ & 0 & 0 & 0 & 0 & 0 & 0 & 0 & 0 & 1 & 1 & 1 & 1 & 1 & 1 & 1 & 1\\
 $\boldsymbol{v}_3, \,\,\,t_2$ & 0 & 0 & 0 & 0 & 1 & 1 & 1 & 1 & 0 & 0 & 0 & 0 & 1 & 1 & 1 & 1 \\
 $\boldsymbol{v}_2, \,\,\,t_3$ & 0 & 0 & 1 & 1 & 0 & 0 & 1 & 1 & 0 & 0 & 1 & 1 & 0 & 0 & 1 & 1 \\
 $\boldsymbol{v}_1, \,\,\,t_4$ & 0 & 1 & 0 & 1 & 0 & 1 & 0 & 1 & 0 & 1 & 0 & 1 & 0 & 1 & 0 & 1 \\
\end{tabular}
\end{center}
\end{table*}
\begin{theorem}\label{thm:MinWeightRM}
(see~\cite[Chapter 13, Theorems 5 \& 8]{Coding:books/MacWilliamsS77})  
\begin{itemize}
    \item Let $f$ be a minimum weight codeword of \( \mathrm{RM}(r, m) \), say $f = \boldsymbol{\chi}(S)$. Then $S$ is an $(m-r)$-dimensional flat in $\mathrm{EG}(m, 2)$.
    \item The minimum-weight codewords of \( \mathrm{RM}(r, m) \) are precisely the incidence vectors of the \((m - r)\)-flats in \( \mathrm{EG}(m, 2) \).
\end{itemize}
\end{theorem}

\begin{theorem}\label{thm:CodewordsRM}
(see~\cite[Chapter 13, Theorem 12]{Coding:books/MacWilliamsS77}) The set of incidence vectors of all \((m - r)\)-flats in \( \mathrm{EG}(m, 2) \) generates the Reed--Muller code \( \mathrm{RM}(r, m) \).
\end{theorem}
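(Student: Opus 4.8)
The plan is to prove the two inclusions separately. Let $\mathcal{C} \subseteq \mathbb{F}_2^{2^m}$ denote the linear code spanned by the incidence vectors $\boldsymbol{\chi}(F)$ of all $(m-r)$-flats $F$ of $\mathrm{EG}(m,2)$. One inclusion is immediate from Theorem~\ref{thm:MinWeightRM}: each $\boldsymbol{\chi}(F)$ is a (minimum-weight) codeword of $\mathrm{RM}(r,m)$, and since $\mathrm{RM}(r,m)$ is linear it contains their span, so $\mathcal{C} \subseteq \mathrm{RM}(r,m)$. All of the work is in proving $\mathrm{RM}(r,m) \subseteq \mathcal{C}$.

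For that, I would show that every multilinear monomial $v_{i_1} v_{i_2} \cdots v_{i_s}$ with $0 \le s \le r$ and $i_1 < \cdots < i_s$ lies in $\mathcal{C}$; since these monomials form a basis of $\mathrm{RM}(r,m)$ (Definition~\ref{def:RM_code}; there are $\binom{m}{\le r} = \dim \mathrm{RM}(r,m)$ of them and they are independent as functions on $\mathbb{F}_2^m$), this yields the inclusion and hence equality. The building blocks are the \emph{coordinate $(m-r)$-flats}: given $r$ distinct indices $j_1 < \cdots < j_r$ and a bit string $\boldsymbol{a} = (a_1,\dots,a_r) \in \mathbb{F}_2^r$, the affine set $F_{\boldsymbol{a}} = \{\boldsymbol{v} \in \mathbb{F}_2^m : v_{j_\ell} = a_\ell,\ \ell = 1,\dots,r\}$ is an $(m-r)$-flat, and using $[v = a] = 1 + a + v$ over $\mathbb{F}_2$ its incidence vector is the degree-$r$ Boolean function $\boldsymbol{\chi}(F_{\boldsymbol{a}}) = \prod_{\ell=1}^{r}(1 + a_\ell + v_{j_\ell})$. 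In particular $\boldsymbol{\chi}(F_{\boldsymbol{a}}) \in \mathcal{C}$.

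The crux is a telescoping sum over $\mathbb{F}_2$: since $\sum_{a \in \mathbb{F}_2}(1 + a + v) = 1$, summing $\boldsymbol{\chi}(F_{\boldsymbol{a}})$ over the bits $a_\ell$ indexed by a subset $J' \subseteq \{j_1,\dots,j_r\}$ (holding the remaining bits fixed) replaces each corresponding factor by $1$ and leaves $\prod_{j_\ell \notin J'}(1 + a_\ell + v_{j_\ell})$, which still lies in $\mathcal{C}$. To realize a target monomial $v_{i_1}\cdots v_{i_s}$ I would choose $\{j_1,\dots,j_r\} \supseteq \{i_1,\dots,i_s\}$ (extending arbitrarily when $s < r$), set the bits at $i_1,\dots,i_s$ equal to $1$, take $J'$ to be the $r-s$ extra coordinates, and sum over those; the surviving product is exactly $\prod_{\ell=1}^{s} v_{i_\ell}$ (for $s=0$ one sums over all $r$ bits and obtains the all-ones vector, i.e. $\boldsymbol{\chi}$ of the $m$-flat $\mathrm{EG}(m,2)$ itself). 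Hence every basis element of $\mathrm{RM}(r,m)$ belongs to $\mathcal{C}$.

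I do not expect a deep obstacle, as the result is classical; the real care is bookkeeping—verifying that the $F_{\boldsymbol{a}}$ are genuine $(m-r)$-flats (exactly $r$ affine constraints on distinct coordinates) and that summing over disjoint blocks of bits isolates precisely the desired monomial and nothing else. If one prefers to avoid the explicit construction, an alternative is to observe that $\mathcal{C}$ is invariant under the affine group of $\mathrm{EG}(m,2)$, contains a minimum-weight word, and then to match $\dim\mathcal{C}$ with $\binom{m}{\le r}$ using the dual/subcode structure of RM codes; but the telescoping argument is self-contained, incidentally reproves the ``$\subseteq$'' half of Theorem~\ref{thm:MinWeightRM} for coordinate flats, and hands back an explicit generating set, which fits the constructive tone of the paper.
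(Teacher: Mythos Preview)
Your proof is correct. The paper does not supply its own proof of this statement; it simply quotes it as a classical result from MacWilliams--Sloane, so there is nothing to compare against beyond noting that your argument---writing each monomial generator of $\mathrm{RM}(r,m)$ as an $\mathbb{F}_2$-sum of incidence vectors of coordinate $(m-r)$-flats via the identity $\sum_{a\in\mathbb{F}_2}(1+a+v)=1$---is essentially the standard textbook proof.
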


We introduce the following lemma about interactions between the flats in \( \mathrm{EG}(m, 2) \), which is necessary in the original proofs of the aforementioned theorems and also in proving other results in this paper.

\begin{lemma}\label{lem:FlatIntersection}
(see~\cite[Chapter 13]{Coding:books/MacWilliamsS77}) Let \( H \) be any flat in \( \mathrm{EG}(m, 2) \) with incidence vector \( \boldsymbol{\chi}(H) \). If \( \boldsymbol{f} \) is the incidence vector of a set \( S \), then the component-wise product \( \boldsymbol{\chi}(H) \cdot \boldsymbol{f} \) yields the incidence vector of the intersection \( S \,\cap\, H \).
\end{lemma}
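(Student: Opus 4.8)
The plan is to prove this directly from the definition of an incidence vector, since the statement reduces to a coordinate-by-coordinate check. Both $\boldsymbol{\chi}(H)$ and $\boldsymbol{f} = \boldsymbol{\chi}(N)$ are binary vectors of length $2^m$, indexed by the points $P_1, \dots, P_{2^m}$ of $\mathrm{EG}(m,2)$, with $\boldsymbol{\chi}(H)_i = 1$ exactly when $P_i \in H$ and $f_i = 1$ exactly when $P_i \in N$. I would simply unfold what the component-wise product does to each coordinate and recognize the result as the membership indicator of $N \cap H$.

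First I would fix an index $i \in [2^m]$ and write out the $i$-th entry of the component-wise product: $(\boldsymbol{\chi}(H) \cdot \boldsymbol{f})_i = \boldsymbol{\chi}(H)_i \, f_i$. Since each factor lies in $\{0,1\}$, the product (whether taken in $\mathbb{F}_2$ or in the integers, which agree here) equals $1$ if and only if $\boldsymbol{\chi}(H)_i = f_i = 1$, i.e., if and only if $P_i \in H$ and $P_i \in N$ hold simultaneously, which is precisely the condition $P_i \in N \cap H$; otherwise the entry is $0$. This is exactly the defining rule for the $i$-th coordinate of $\boldsymbol{\chi}(N \cap H)$. Ranging over all $i \in [2^m]$ yields $\boldsymbol{\chi}(H) \cdot \boldsymbol{f} = \boldsymbol{\chi}(N \cap H)$, as claimed.

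There is essentially no obstacle to overcome: the only observation is that coordinatewise multiplication of $0$--$1$ vectors implements the logical AND of the two membership indicators, and the AND of two membership relations is membership in the intersection. I would also add a one-line remark that the hypothesis ``$H$ is a flat'' is not actually used in the lemma itself — the identity holds for the incidence vectors of any two point sets of $\mathrm{EG}(m,2)$ — but it is phrased for a flat because that is the case invoked later (for instance when $\boldsymbol{\chi}(H)$ is a minimum-weight codeword of an RM code, by Theorem~\ref{thm:MinWeightRM}, and one wishes to describe the support of its product with another codeword or flat incidence vector).
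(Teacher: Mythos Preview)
Your proof is correct: the lemma is indeed an immediate coordinate-by-coordinate verification that the product of two $\{0,1\}$-indicators is the indicator of the intersection, and your remark that the ``flat'' hypothesis is not used in the lemma itself is also accurate. The paper does not supply its own proof of this lemma but simply cites it from \cite[Chapter~13]{Coding:books/MacWilliamsS77}, so there is nothing further to compare.
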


\begin{exmp} \label{ex:flat_example}
\textcolor{black}{In \( \mathrm{EG}(m, 2) \), each row \( \boldsymbol{v}_i \) of the generator matrix corresponds to the variable \( t_{m+1-i} \). Specifically, \( \boldsymbol{v}_i \) is the incidence vector of the \( (m-1) \)-flat \( t_{m+1-i} = 1 \), while its component-wise complement \( \overline{\boldsymbol{v}}_i \) represents the flat \( t_{m+1-i} = 0 \). Consequently, the product of the first \( m-\ell \) complemented rows,
\[
\prod_{k=\ell+1}^{m} \overline{\boldsymbol{v}}_k = \overline{\boldsymbol{v}}_m \cdot \overline{\boldsymbol{v}}_{m-1} \cdot \ldots \cdot \overline{\boldsymbol{v}}_{\ell+1},
\]
generates the incidence vector of the \( \ell \)-flat defined by the intersection \( (t_1 = 0) \land (t_2 = 0) \land \dots \land (t_{m-\ell} = 0) \), which consists of the points $\{P_1, P_2, \dots, P_{2^{\ell}}\}$.}

The point \( P_{2^m} \), which corresponds to the coordinate vector of all ones (i.e., \( t_1 = \dots = t_m = 1 \)), lies in the intersection flat defined by the product \( \boldsymbol{v}_m \dots \boldsymbol{v}_1 \). For instance, in \( \mathrm{EG}(4, 2) \), the point \( P_{16} \) (where \( t_1 = \dots = t_4 = 1 \)) lies in the intersection flat whose incidence vector is given by the element-wise product \( \boldsymbol{v}_4 \boldsymbol{v}_3 \boldsymbol{v}_2 \boldsymbol{v}_1 = 0000000000000001 \). More details on these standard observations can be found in~\cite{Coding:books/MacWilliamsS77}.
\end{exmp}
\subsection{Reed's Decoding}
One of the earliest and most practical decoding methods for RM codes is the \textbf{Reed decoding algorithm}. While not a maximum-likelihood decoder (i.e., it does not minimize word error probability), it is optimal for minimizing \textit{symbol error probability}, making it particularly well-suited for data recovery tasks where retrieving individual objects is the primary goal.

We illustrate its operation using the \([16, 11, 4]\) second-order code \( \mathrm{RM}(2, 4) \). Its generator matrix is given in~\eqref{eq:RM_generator}. Reed's decoding recovers symbols sequentially, starting with the highest-degree terms. To recover the second-degree symbol \( a_{12} = \boldsymbol{a} \cdot \mathbf{e}_{11} \), we observe that the basis vector \( \mathbf{e}_{11} \) can be decomposed in multiple ways as a sum of columns of \( \mathbf{G} \):
\begin{align*}
\mathbf{e}_{11} 
\, =\, c_1 + c_2 + c_3 + c_4 \, =\, c_5 + c_6 + c_7 + c_8 \,=\, c_9 + c_{10} + c_{11} + c_{12} \,= \,c_{13} + c_{14} + c_{15} + c_{16}.
\end{align*}
Since \( a_{12} = \boldsymbol{a} \cdot \mathbf{e}_{11} \), applying these sums to the codeword yields four independent equations for \( a_{12} \):
\begin{align}\label{eq:recovery_sets}
a_{12} = \sum_{i=1}^4 x_i = \sum_{i=5}^8 x_i = \sum_{i=9}^{12} x_i = \sum_{i=13}^{16} x_i.
\end{align}
These four sets of coordinates--$\{x_1, x_2, x_3, x_4\}$, $\{x_5, x_6, x_7, x_8\}$, $\{x_9, x_{10}, x_{11}, x_{12}\}$, and $\{x_{13}, x_{14}, x_{15}, x_{16}\}$--are called \emph{recovery sets} for \( a_{12} \). For binary RM codes, a recovery set is formally defined as a \emph{minimal} set of codeword symbols whose sum yields the target message symbol.

These disjoint recovery sets provide four independent ``votes" for \( a_{12} \), allowing the correct value to be determined via majority vote even in the presence of errors. Suppose the received vector $\boldsymbol{y}$ differs from the transmitted codeword $\boldsymbol{x}$ due to channel errors, i.e., $\boldsymbol{y} = \boldsymbol{x} + \boldsymbol{e}$, where $\boldsymbol{e} \in \mathbb{F}_2^{16}$ is a binary noise vector. For example, if $y_1,\, y_2,\, y_3$ are flipped (i.e., $y_i = x_i+1, \, i=1,\, 2,\, 3$) while all other symbols are received correctly, then the estimate $\sum_{i=1}^4 y_i$ disagrees with $\sum_{i=5}^8 y_i$, $\sum_{i=9}^{12} y_i$, and $\sum_{i=13}^{16} y_i$. A majority vote among them recovers the correct value of $a_{12}$. This technique is known as \textbf{majority-logic decoding} (MLD). Codes that can be decoded using this scheme are known as \textit{majority-logic decodable} codes. These codes are attractive in practice due to their low decoding complexity and their potential to correct beyond worst-case error bounds~\cite{Coding:books/MacWilliamsS77,Coding:books/PetersonW72,soft_MLD_RM}. For example, in the previous case, even if 8 bits \( x_1, x_2, x_3, x_4, x_5, x_6, x_7, x_8 \) are flipped, the first two votes for \( a_{12} \) remain correct, as $1+1+1+1 = 0$ over $\mathbb{F}_2$. 

\textcolor{black}{Once all second-degree symbols are recovered, their contributions are subtracted from the received codeword, reducing the problem to decoding a first-order RM code. For instance, if we determine via majority vote that \( a_{12} = 1 \), we compute the residual \( \boldsymbol{x}' = \boldsymbol{x} - \boldsymbol{v}_1\boldsymbol{v}_2 \). This operation removes the quadratic term \( v_1 v_2 \) from the polynomial. After repeating this subtraction for all second-degree coefficients ($a_{12}, \dots, a_{34}$), the final residual vector contains only linear and constant terms, effectively becoming a codeword in \( \mathrm{RM}(1, 4) \). This allows the decoder to proceed to recover the first-degree symbols \( a_1, \dots, a_4 \). For a general $\mathrm{RM}(r,m)$ code, this decoding proceeds sequentially in $(r+1)$ stages in a similar way (see an explicit example in Remark~\ref{rm:generalization_Reed}).} 

\begin{table}[h]
\centering
\caption{Summary of Notations for Reed-Muller codes}
\label{tab:notations}
\renewcommand{\arraystretch}{1.2}
\begin{tabular}{c l}
\hline
\textbf{Notation} & \textbf{Description} \\
\hline
$\boldsymbol{t}$ & Binary $m$-tuple $(t_1, \dots, t_m)$ representing coordinates in $\mathrm{EG}(m,2)$. \\
$\boldsymbol{a}$ & Binary message of length $k$ containing symbols $[a_0, a_1, \dots, a_k$]. \\
$\mathbf{G}\in \mathbb{F}_2^{k \times n}$ & Generator matrix of the Reed--Muller code. \\
$\boldsymbol{x}$ & Codeword vector of length $n=2^m$, $\boldsymbol{x} = \boldsymbol{a}\cdot\mathbf{G}$. \\
$c_j$ & The $j$-th column of $\mathbf{G}$ (associated with server $j$). \\
$\boldsymbol{v}_i$ & The row of $\mathbf{G}$ corresponding to the variable $t_{m+1-i}$, i.e., \\
& if $\boldsymbol{v}_i = \boldsymbol{\chi}(S(\boldsymbol{v}_i))$, then $S(\boldsymbol{v}_i)$ is the flat $t_{m+1-i} = 1$. (See also Example~\ref{ex:flat_example}.)\\
$\mathbf{e}_j$ & The $j$-th standard basis (unit) vector of length $k$. \\
$\sigma^{\ell}$ & Length-$\ell$ ordered tuple of distinct indices from $[m]$ identifying a degree-$\ell$ message symbol.\\
$P_i$ & The $i$-th point in the Euclidean geometry $\mathrm{EG}(m, 2)$. \\
$\boldsymbol{\chi}(S)$ & Incidence vector of a subset $S \subseteq \mathrm{EG}(m, 2)$. \\
& Concretely, $\boldsymbol{\chi}(S)_i = 1$ when $P_i \in S$, and $0$ otherwise. \\
\hline
\end{tabular}
\end{table}

\section{Recovery sets of Reed--Muller codes}\label{sec:recovery}
In this section, we characterize the recovery sets for RM code message symbols. We begin by classifying message symbols by their algebraic degree \( \ell \in \{0, \dots, r\} \). For instance, in our example:
\begin{itemize}
    \item \textbf{Order 2:} \( a_{12}, a_{13}, a_{23}, a_{14}, a_{24}, a_{34} \) (corresponding to \( \mathbf{e}_{11} \) down to \( \mathbf{e}_6 \)).
    \item \textbf{Order 1:} \( a_1, a_2, a_3, a_4 \) (corresponding to \( \mathbf{e}_5 \) down to \( \mathbf{e}_2 \)).
    \item \textbf{Order 0:} \( a_0 \) (corresponding to \( \mathbf{e}_1 \)).
\end{itemize}
To formalize this ordering, we define for each \( \ell \in [r] \) a length-\( \ell \) tuple \( \sigma^{\ell} = \sigma_1 \sigma_2 \dots \sigma_{\ell} \), where the indices are drawn without replacement from the set \([m]= \{1, 2, \dots, m\} \) and satisfy \( \sigma_1 < \sigma_2 < \dots < \sigma_{\ell} \). These tuples index the degree-\( \ell \) message symbols. For instance, the tuples \( 12, 13, 14, 23, 24, 34 \) index the second-degree tuples for \( m = 4 \). When \( \ell = 0 \), we define the special tuple \( \sigma^0 = (0) \), and denote \( a_{\sigma^0} := a_0 \).

In our setting of the SRR problem using RM codes, we say a \emph{data object \( j \) (corresponding to vector $\mathbf{e}_j$) is of order \( \ell \)} if its index \( j \) lies in the range:
\begin{equation}\label{eq:object_range}
\sum\limits_{i=0}^{\ell-1} \binom{m}{i} + 1 \le j \le \sum\limits_{i=0}^{\ell} \binom{m}{i},
\end{equation}
i.e., $\ell$ is the smallest number such that $\sum\limits_{i=0}^{\ell} \binom{m}{i} \ge j$. Then there exists a length-\( \ell \) tuple \( \sigma^{\ell} \) such that
\(
a_{\sigma^{\ell}} = \boldsymbol{a} \cdot \mathbf{e}_j,
\)
which a special case $a_{0} = a_{\sigma^0} = \boldsymbol{a}\cdot\mathbf{e}_1$. That is, the unit vector \( \mathbf{e}_j \) corresponds to a symbol of order \( \ell \). In general, there are \( \binom{m}{\ell} \) symbols of order \( \ell \). \textcolor{black}{Conversely, for each \( \ell \,\in\, \{0, 1, \dots, r\} \), define
    \begin{align}\label{eq:data_indices}
    p(\ell) = \sum\limits_{i=0}^{\ell-1} \binom{m}{i} + 1 \quad \text{and} \quad q(\ell) = \sum\limits_{i=0}^{\ell} \binom{m}{i}.
    \end{align}
    Then, $[p(\ell),\, q(\ell)]$ is the range of indices of objects that are associated with message symbols of the same order $\ell$.}

The order of a symbol is critical because, as we will show, the structure of the recovery sets differ across symbols of different orders, directly impacting their achievable service rate in distributed storage systems. The following theorem summarizes Reed’s majority-logic decoding algorithm; see, for example,
\cite[Ch.~13, Thm.~14]{Coding:books/MacWilliamsS77}.

\begin{theorem}[Reed's decoding algorithm]\label{thm:ReedDecoding} Each message symbol of order $r$ (the highest order) \( a_{\sigma^r} \) can be determined by partitioning the \( 2^m \) coordinates of the codeword \( \boldsymbol{x} = \boldsymbol{a} \cdot \mathbf{G} \) into \( 2^{m-r} \) pairwise disjoint subsets of size \( 2^r \), where the sum of the coordinates within each subset equals \( a_{\sigma^r} \). Each such set is a recovery set for $a_{\sigma^r}$.
\end{theorem}
\textcolor{black}{This theorem generalizes the equalities shown in~\eqref{eq:recovery_sets}. Geometrically, the coordinates constituting each recovery set correspond to the points of an \( r \)-flat in \( \mathrm{EG}(m, 2) \). Specifically, if we denote the recovery sets as $S_1, S_2, \dots, S_{2^{m-r}}$, then for each $i$, the point set $\{P_j \mid \, j \in S_i\}$ forms an $r$-flat. The flat containing the origin \( P_1 \) is an \( r \)-dimensional linear subspace, while the remaining \( 2^{m - r} - 1 \) flats are its \emph{affine translations}. For example, in \( \mathrm{RM}(2, 4) \), the four recovery sets in~\eqref{eq:recovery_sets} correspond to the $2$-flats 
\begin{align}
\label{eq:example_RM24}
\{P_1, P_2, P_3, P_4\}, \  \{P_5, P_6, P_7, P_8\},\ \{P_9,  P_{10}, P_{11}, P_{12}\},\ \text{and } \{P_{13},  P_{14}, P_{15}, P_{16}\}.
\end{align}
Among these, the first, which contains the origin $P_1$, is a $2$-dimensional linear subspace, and the others are its affine translations. Note that they are pairwise disjoint.}

While Reed's algorithm effectively utilizes these sets, it is inherently sequential: recovering lower-degree symbols requires decoding and subtracting all higher-degree terms first. To enable efficient, random access in storage systems, we develop a first-ever framework for the \textbf{direct, parallel recovery} of symbols of \emph{arbitrary} order \( \ell \). The following results generalize Theorem~\ref{thm:ReedDecoding}, establishing the existence and structure of recovery sets for any symbol \( a_{\sigma^{\ell}} \):

\begin{itemize}
    \item There exists a unique recovery set $S$ of size \( 2^{\ell} \). The points $\{P_j \mid j \in S\}$ form a specific \( \ell \)-dimensional linear subspace \( \mathscr{S} \subset \mathrm{EG}(m, 2) \) whose construction is explicitly known (Theorem~\ref{thm:RecoverySet}).
    
    \item All other recovery sets for \( a_{\sigma^{\ell}} \) have size at least \( 2^{r+1} - 2^{\ell} \) (Theorem~\ref{thm:RecoverySetSize}). Among these, exactly \( \gaussbinom{m - \ell}{r + 1 - \ell} \) attain the minimum size of \( 2^{r+1} - 2^{\ell} \). Each such recovery set \( T \) corresponds to the complement of \( \mathscr{S} \) within a distinct \( (r+1) \)-dimensional linear subspace containing \( \mathscr{S} \). This establishes a one-to-one correspondence between these subspace complements and the recovery sets. Furthermore, these sets form a \emph{combinatorial $t$-design} with $t=1$: each point \( P_j \in \mathrm{EG}(m, 2) \setminus \mathscr{S} \) is included in exactly \( \gaussbinom{m - \ell - 1}{r - \ell} \) such complements (Theorem~\ref{thm:RecoverySetCount}). (Recall that a \( t\text{--}(n,k,\lambda) \) block design (or a \( t \)-design) is a combinatorial structure consisting of a set \( V \) of \( n \) elements (called \emph{points}) and a collection of $k$-element subsets of \( V \) (called \emph{blocks}), such that every \( t \)-subset of \( V \) is contained exactly in \( \lambda \) blocks.)
\end{itemize}

\begin{theorem}\label{thm:RecoverySet}
For any integer \( \ell \) with \( 1 \leq \ell \leq r \), the symbol \( a_{\sigma^{\ell}} \) can be recovered by summing the codeword coordinates indexed by a specific coordinate subset \( S \subseteq [2^m] \). This set satisfies:
\begin{equation}\label{min_recovery_set}
\begin{cases}
    S \ni 1, \\
    |S| = 2^{\ell}, \\
    \sum_{j \in S} x_j = a_{\sigma^{\ell}}.
\end{cases}
\end{equation}
We refer to \( S \) as a \textit{recovery set} for \( a_{\sigma^{\ell}} \). Geometrically, the points \( \{ P_j \mid j \in S \} \) form an \( \ell \)-dimensional linear subspace \( \mathscr{S} \subset \mathrm{EG}(m, 2) \) (or, \textcolor{black}{an $\ell$-flat passing through the origin $P_1$}).
\end{theorem}

\begin{proof}
    The proof can be found in Appendix~\ref{app:proof_theorem4}.
\end{proof}

With the existence of $S$ established, a natural question arises: is this recovery set unique, or simply one of many? The following result confirms that for $\ell < r$, $S$ is strictly the smallest recovery set, identifying a noticeable ``size gap'' between it and any alternative.

\begin{theorem}\label{thm:RecoverySetSize}
    If \( {\ell} < r \), then \( S \) is the \textbf{unique} recovery set for \( a_{\sigma^{\ell}} \) with size less than \( 2^r \). Any other recovery set must have a size of at least \( 2^{r+1} - |S| = 2^{r+1} - 2^{\ell} \). 
    
    In the case \( {\ell} = r \), the set \( S \) has size \( 2^r \), which matches the size of all other recovery sets for \( a_{\sigma^r} \).
\end{theorem}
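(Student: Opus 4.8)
The plan is to reduce everything to the minimum distance of the dual Reed--Muller code. The starting point is that recovery sets for a fixed symbol $a_{\sigma^{\ell}}$ are exactly the minimal $N$ with $\mathbf{G}\boldsymbol{\chi}(N)^{\top}=\mathbf{e}_j$ (where $j$ indexes $a_{\sigma^{\ell}}$), so if $S$ is the distinguished recovery set of Theorem~\ref{thm:RecoverySet} (with $|S|=2^{\ell}$) and $N$ is any other recovery set, then $\mathbf{G}\bigl(\boldsymbol{\chi}(S)+\boldsymbol{\chi}(N)\bigr)^{\top}=\mathbf{0}$. Hence $\boldsymbol{\chi}(S)+\boldsymbol{\chi}(N)$ lies in the right kernel of $\mathbf{G}$, which --- since $\mathbf{G}$ generates $\mathrm{RM}(r,m)$ and $m\ge r+1$ --- is precisely the dual code $\mathrm{RM}(r,m)^{\perp}=\mathrm{RM}(m-r-1,m)$, whose minimum distance is $2^{m-(m-r-1)}=2^{r+1}$.

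First I would record that $\boldsymbol{\chi}(S)+\boldsymbol{\chi}(N)$ is nonzero exactly when $N\ne S$, so in that case $\textsf{wt}\bigl(\boldsymbol{\chi}(S)+\boldsymbol{\chi}(N)\bigr)\ge 2^{r+1}$. Combining this with the (reverse) triangle inequality for Hamming weight, $\textsf{wt}(\boldsymbol{\chi}(N))=\textsf{wt}\bigl((\boldsymbol{\chi}(S)+\boldsymbol{\chi}(N))+\boldsymbol{\chi}(S)\bigr)\ge \textsf{wt}\bigl(\boldsymbol{\chi}(S)+\boldsymbol{\chi}(N)\bigr)-\textsf{wt}(\boldsymbol{\chi}(S))$, yields $|N|\ge 2^{r+1}-|S|=2^{r+1}-2^{\ell}$, which is exactly the asserted lower bound on the size of any recovery set other than $S$.

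Next I would split on $\ell$. If $\ell<r$, then $2^{r+1}-2^{\ell}>2^{r}$ (equivalently $2^{r+1}-2^{r}=2^{r}>2^{\ell}$), while $|S|=2^{\ell}<2^{r}$; hence any recovery set of size strictly less than $2^{r}$ must coincide with $S$, which gives both uniqueness and the bound $|N|\ge 2^{r+1}-2^{\ell}>2^{r}$ for every other recovery set. If $\ell=r$, the same inequality gives $|N|\ge 2^{r+1}-2^{r}=2^{r}=|S|$, so $S$ attains the minimum recovery-set size; tightness is immediate from Theorem~\ref{thm:ReedDecoding}, which already exhibits $2^{m-r}$ pairwise disjoint recovery sets of size exactly $2^{r}$ (each a translate of $S$, i.e.\ the complement of $S$ inside an $(r+1)$-dimensional subspace containing it, consistent with Theorem~\ref{thm:RecoverySetCount}).

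I do not anticipate a serious obstacle: the whole argument is the observation that two solutions of a linear system differ by a kernel element, specialized to the fact that the kernel here is a Reed--Muller code of known minimum distance, followed by an elementary comparison of the integers $2^{\ell}$, $2^{r}$, and $2^{r+1}-2^{\ell}$. The only step warranting care is the identification of recovery sets with exact solutions of $\mathbf{G}\boldsymbol{\chi}^{\top}=\mathbf{e}_j$: minimality forces the unique $\mathbb{F}_2$-linear combination of the chosen columns equal to $\mathbf{e}_j$ to use every chosen column, so the $0/1$ indicator $\boldsymbol{\chi}(N)$ really is the solution vector and the ``symmetric difference is a dual codeword'' step is legitimate.
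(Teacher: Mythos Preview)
Your proposal is correct and follows essentially the same approach as the paper: both arguments observe that the symmetric difference of two recovery sets yields a nonzero element of the right kernel of $\mathbf{G}$, identify this kernel with $\mathrm{RM}(m-r-1,m)$ of minimum distance $2^{r+1}$, and then extract the size bound on $|N|$ via $|N|\ge |S\triangle N|-|S|$ (the paper writes this as an explicit cardinality identity, you phrase it as the reverse triangle inequality for Hamming weight, but these are the same computation). The paper additionally records the slightly sharper fact $|N\setminus S|\ge 2^{r+1}-2^{\ell}$, which it uses later to conclude that recovery sets of size exactly $2^{r+1}-2^{\ell}$ are disjoint from $S$; your argument yields this too, since $\textsf{wt}(\boldsymbol{\chi}(S)+\boldsymbol{\chi}(N))=|S\triangle N|\ge 2^{r+1}$ directly gives $|N\setminus S|=|S\triangle N|-|S\setminus N|\ge 2^{r+1}-2^{\ell}$.
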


\begin{proof}
    The proof can be found in Appendix~\ref{app:proof_theorem5}.
\end{proof}

Having identified the unique smallest recovery set, we now turn to the larger, ``next-best'' recovery sets. The following theorem uses Gaussian binomial coefficients to count these sets and reveals their elegant geometric structure: they are formed by taking larger subspaces and removing the core subspace $\mathscr{S}$.

\begin{theorem}\label{thm:RecoverySetCount}
    For each symbol \( a_{\sigma^{\ell}} \), there are exactly \( \gaussbinom{m - {\ell}}{r + 1 - {\ell}} \) recovery sets of size \( 2^{r+1} - 2^{\ell} \). Each such set corresponds to the complement of \( \mathscr{S} \) within a distinct \( (r+1) \)-dimensional linear subspace of \( \mathrm{EG}(m, 2) \) containing \( \mathscr{S} \) (and is therefore disjoint from $S$).
    
     Furthermore, these recovery sets are distributed evenly across the remaining coordinates: every coordinate \( x_j \) with \( j \notin S \) appears exactly in \( \gaussbinom{m - {\ell} - 1}{r - {\ell}} \) of these sets, forming a 1-design (i.e., a $t$-design with $t=1$) on the set $[n]\setminus S$.
\end{theorem}
\begin{proof}
    The proof can be found in Appendix~\ref{app:proof_theorem6}.
\end{proof}

\begin{remark}
\label{rm:generalization_Reed}
We now show that the findings from Theorems~\ref{thm:RecoverySet}--\ref{thm:RecoverySetCount} together generalize Theorem~\ref{thm:ReedDecoding} which appears as the special case \(\ell = r\). In this specific instance:
\begin{itemize}
    \item The unique smallest recovery set $S$ has size $2^{\ell} = 2^r$. Since $1 \in S$, the set of points $\{P_j \mid j \in S\}$ passes through the origin $P_1$ and forms a linear subspace of $r$ dimensions in $\mathrm{EG}(m, 2)$.
    \item There are \(\gaussbinom{m - {\ell}}{r + 1 - {\ell}} = \gaussbinom{m-r}{1} = 2^{m-r} - 1\) additional recovery sets, each of size $2^{r+1} - 2^{\ell} = 2^{r+1} - 2^r = 2^r$. Geometrically, these correspond to the distinct affine translates of the formed by $\{P_j \mid j \in S\}$.
    \item Each coordinate lies in exactly \(\gaussbinom{m - {\ell} - 1}{r - {\ell}} = \gaussbinom{m-r-1}{0} = 1\) of these recovery sets, confirming that the collection of recovery sets forms a partition of the coordinate indices (i.e., they are pairwise disjoint).
\end{itemize}
\textcolor{black}{Continuing with Example~\ref{ex:recovery_hypergraph} for $\mathrm{RM}(1,\,2)$, classical Reed decoding first characterizes the recovery sets for the message symbols of the highest order ($r=1$):
\begin{align*}
a_1 &= x_1 + x_2 = x_3 + x_4, \\
a_2 &= x_1 + x_3 = x_2 + x_4.
\end{align*}
Upon receiving a possibly corrupted sequence $\boldsymbol{x}' = [x_1', x_2', x_3', x_4']$, the decoder performs a majority vote based on these equations to derive estimates $\hat{a}_1$ and $\hat{a}_2$. To recover the lower-order symbol $a_0$, the contributions of $\hat{a}_1$ and $\hat{a}_2$ must first be subtracted from $\boldsymbol{x}'$. This yields a codeword in $\mathrm{RM}(0,\,2)$, to which Reed's algorithm is applied again:
\begin{align*}
a_0 = x_1 = x_2 - \hat{a}_1 = x_3 - \hat{a}_2 = x_4 - \hat{a}_1 - \hat{a}_2.
\end{align*}
This process highlights the inherent sequentiality of Reed's algorithm: the recovery of $a_0$ (order 0) depends on the successful prior decoding of $a_1$ and $a_2$ (order 1).
In contrast, our results determine the recovery sets for all message symbols simultaneously, independent of their order. Applying our theorems yields:
\begin{align*}
a_1  &= x_1 + x_2 = x_3 + x_4, \\ 
a_2 &= x_1 + x_3 = x_2 + x_4, \\  
a_0  &= x_1 = x_2 + x_3 + x_4.
\end{align*}
Crucially, the equations for $a_0$ enable direct estimation using majority voting on the raw received symbols, without requiring intermediate estimates of $a_1$ and $a_2$. This generalizes Reed's decoding process to enable direct, parallel data recovery for all message symbols.}
\end{remark}

The structural properties established in Theorems~\ref{thm:RecoverySet}--\ref{thm:RecoverySetCount} for message symbols extend directly to the recovery of unit vectors. We explicitly formalize this connection below to facilitate the analysis of the SRR in the next section.

\begin{remark}\label{restate:recovery_set}
For each ${\ell} \in \{0, 1, \dots, r\}$, let \( i \) be any integer satisfying:
    \[
    \sum\limits_{j=0}^{{\ell}-1} \binom{m}{j} + 1 \leq i \leq \sum\limits_{j=0}^{\ell} \binom{m}{j}.
    \]
    In other words, $i$ is the object index associated with a message symbol of order ${\ell}$. Then, for the basis vector \( \mathbf{e}_i \), there exists a coordinate subset \( S \subseteq [2^m] \) of size \( 2^{\ell} \) such that:
    \begin{equation*}
    \begin{cases}
        S \ni 1, \\
        |S| = 2^{\ell}, \\
        \sum\limits_{j \in S} \boldsymbol{c}_j = \mathbf{e}_i.
    \end{cases}
    \end{equation*}
    Furthermore, the recovery set \( S \) satisfies the following uniqueness properties:
    \begin{itemize}
        \item \textbf{When \( {\ell} < r \):} \( S \) is the \textbf{unique} recovery set for \( \mathbf{e}_i \) with cardinality less than \( 2^r \). Any other recovery set for \( \mathbf{e}_i \) must have a size of at least \( 2^{r+1} - |S| = 2^{r+1} - 2^{\ell} > 2^r \). There are exactly \( \gaussbinom{m-{\ell}}{r+1-{\ell}} \) recovery sets of this next-smallest size, and each column \( \boldsymbol{c}_j \) with \( j \in [2^m] \setminus S \) appears in exactly \( \gaussbinom{m-{\ell}-1}{r-{\ell}} \) of them.

        \item \textbf{When \( {\ell} = r \):} The set \( S \) has size \( 2^r \), which matches the size of all other recovery sets for \( \mathbf{e}_i \). There are \( 2^{m-r} \) such recovery sets in total, and they are pairwise disjoint.
    \end{itemize}
\end{remark}
\begin{exmp}\label{ex:RM24}
    Consider the Reed--Muller code \( \mathrm{RM}(2, 4) \) with generator matrix given in~\eqref{eq:RM_generator}. This example aims to identify all recovery sets for the symbol \( a_1 \). First, observe that the vector:
    \[
    \boldsymbol{v}_T = \boldsymbol{v}_4 \boldsymbol{v}_3 \boldsymbol{v}_2 = 0000000000000011,
    \]
    represents the incidence vector of a flat \( T \) consisting of the points \( P_{15} = [1, 1, 1, 0]^{\top}\) and \( P_{16} = [1, 1, 1, 1]^{\top} \). Define the subspace \( \mathscr{S} = \{P_{15} + \boldsymbol{1}_4, P_{16} + \boldsymbol{1}_4\} = \left\{[0, 0, 0, 1]^{\top},\ [0, 0, 0, 0]^{\top}\right\} = \{P_1, P_2\} \), which is a 1-dimensional linear subspace with the incidence vector \( \overline{\boldsymbol{v}}_4 \overline{\boldsymbol{v}}_3 \overline{\boldsymbol{v}}_2 = 1100000000000000\). Therefore, the symbol \( a_1 \) is given by:
    \[
    a_1 = \boldsymbol{a} \cdot \mathbf{e}_5 = x_1 + x_2,
    \]
    indicating that \( S = \{1, 2\} \) is a recovery set for \( a_1 \) with size 2.
    
    Additionally, \( a_1 \) can be expressed in multiple ways as a sum of other coordinates:
    \begin{align*}
        a_1 & = x_1 + x_2 = x_3 + x_4 + x_5 + x_6 + x_7 + x_8 \\ 
        & = x_3 + x_4 + x_9 + x_{10} + x_{11} + x_{12} \quad\quad\,\, = x_5 + x_6 + x_9 + x_{10} + x_{13} + x_{14} \\
            & = x_5 + x_6 + x_{11} + x_{12} + x_{15} + x_{16} \,\,\,\,\quad\,= x_3 + x_4 + x_{13} + x_{14} + x_{15} + x_{16} \\
            & = x_7 + x_8 + x_{9} + x_{10} + x_{15} + x_{16}\,\,\,\,\,\,\,\quad = x_7 + x_8 + x_{11} + x_{12} + x_{13} + x_{14}. 
    \end{align*}
    Equivalently, the standard basis vector \( \mathbf{e}_5 \) (corresponding to symbol $a_1$) can be expressed as:
    \begin{align*}
        \mathbf{e}_5 &= [0, 0, 0, 0, 1, 0, 0, 0, 0, 0, 0]^{\top} = c_1 + c_2 \\
                     & = c_3 + c_4 + c_5 + c_6 + c_7 + c_8 \quad\quad\,\,= c_3 + c_4 + c_9 + c_{10} + c_{11} + c_{12} \\
                     & = c_5 + c_6 + c_9 + c_{10} + c_{13} + c_{14} \quad\,\, = c_5 + c_6 + c_{11} + c_{12} + c_{15} + c_{16} \\
                     & = c_3 + c_4 + c_{13} + c_{14} + c_{15} + c_{16} \,\quad = c_7 + c_8 + c_{9} + c_{10} + c_{15} + c_{16} \\
                     & = c_7 + c_8 + c_{11} + c_{12} + c_{13} + c_{14}.
    \end{align*}
    
    In this case, \( {\ell} = 1 \) and \( S = \{1, 2\} \). According to the proof of Theorem~\ref{thm:RecoverySetCount}, the number of minimum-weight codewords \( \boldsymbol{x} \) in the dual code \( \mathrm{RM}(2, 4)^{\perp} = \mathrm{RM}(1, 4) \) that include \( S = \{1, 2\} \) in their support is:
    \[
    \gaussbinom{4 - 1}{2 + 1 - 1} = \gaussbinom{3}{2} = 7.
    \]
    These codewords, defined via their supports, are:
    \begin{align*}
        \mathrm{Supp}(\boldsymbol{x}^1) & = \{1, 2, 3, 4, 5, 6, 7, 8\};\quad \,\quad\,\,\quad       \mathrm{Supp}(\boldsymbol{x}^2) = \{1, 2, 3, 4, 9, 10, 11, 12\}, \\
        \mathrm{Supp}(\boldsymbol{x}^3) & = \{1, 2, 5, 6, 9, 10, 13, 14\}; \quad\quad
        \mathrm{Supp}(\boldsymbol{x}^4) = \{1, 2, 5, 6, 11, 12, 15, 16\}, \\
        \mathrm{Supp}(\boldsymbol{x}^5) & = \{1, 2, 3, 4, 13, 14, 15, 16\}; \,\,\,\quad
        \mathrm{Supp}(\boldsymbol{x}^6) = \{1, 2, 7, 8, 9, 10, 15, 16\}, \\
        \mathrm{Supp}(\boldsymbol{x}^7) &= \{1, 2, 7, 8, 11, 12, 13, 14\}.
    \end{align*}
    Notably, for each \( j \,\in\, \{3, 4, \dots, 16\} = [16]\setminus S\), the coordinate \( j \) appears in exactly:
    \[
    \gaussbinom{4 - 1 - 1}{2 - 1} = \gaussbinom{2}{1} = 3
    \]
    of these codewords. Figure~\ref{fig:RecSets} illustrates how recovery sets for $\mathbf{e}_5$ and $\mathbf{e}_{11}$ are formed by the columns $\boldsymbol{c}_j$.
\end{exmp}
\textbf{Connecting to Dual Codewords:}
The proof of Theorem~\ref{thm:RecoverySetCount} not only quantifies the number of recovery sets of size $2^{r+1}-2^{\ell}$ for each message symbol but also showcases the relationship between these recovery sets and the minimum-weight codewords in the dual Reed--Muller code. 
We conclude this section by a remark establishing a connection between the set of coordinate-constrained codewords in the dual code and all recovery sets, highlighting the existence of an injection map from the former to the latter. This is followed by Corollary~\ref{Corollary:constrained_minweight_CW}, which demonstrates how these established results can be used to count the number of codewords in RM codes having certain properties.

\begin{remark}[Connection to the coordinate‐constrained enumerator problem]
When \({\ell} < r\), Theorems~\ref{thm:RecoverySet} and~\ref{thm:RecoverySetCount} imply that each message symbol \(a_{\sigma^{\ell}}\) has:
\[
\begin{cases}
\text{1 recovery set } S \text{ of size } 2^{\ell},\\
\gaussbinom{m-{\ell}}{r+1-{\ell}} \text{ recovery sets each of size exactly } 2^{r+1} - 2^{\ell}.
\end{cases}
\]
These are also the smallest recovery sets for \(a_{\sigma^{\ell}}\). A natural question is how to \emph{specify} all other recovery sets for \(a_{\sigma^{\ell}}\), or at least \emph{count} their number. We now show that this question leads to a complex, open problem.

Let \(\Sigma\) be the collection of all recovery sets for \(a_{\sigma^{\ell}}\), and define \(\Pi \subseteq \Sigma\) to be those that are disjoint from \(S\); that is, for every \(R \,\in\, \Pi\), we have \(S \,\cap\, R = \emptyset\). Following the argument in \textbf{Step~3} of the proof of Theorem~\ref{thm:RecoverySetCount}, we establish a one-to-one correspondence between each recovery set \(R \,\in\, \Pi\) and a codeword \(\boldsymbol{x}\) satisfying
\[
\boldsymbol{x} \;\,\in\,\; \mathcal{C}^\perp \;=\; \mathrm{RM}(m-r-1,m),
\quad\text{and}\quad
S \;\subseteq\; \mathrm{Supp}(\boldsymbol{x}).
\]
For instance, in Example~\ref{ex:RM24}, the smallest recovery set for \(a_1\) in \(\mathrm{RM}(2,4)\) is \(S = \{1,2\}\). Determining \emph{all} recovery sets for \(a_1\) in $\Pi$ is then equivalent to identifying all dual codewords \(\boldsymbol{x}\,\in\, \mathcal{C}^\perp\) whose support contains \(\{1,2\}\), i.e., determine $\{\boldsymbol{x} \in \mathrm{RM}(1,4) \mid \{1, 2\} \in \mathrm{Supp}(\boldsymbol{x})\}$.

In a broader sense, identifying all codewords in a Reed--Muller code whose values are constrained to be one at specific coordinate positions is tied to the \emph{coordinate-constrained enumerator} problem. This problem remains unsolved for general-order RM codes, and its resolution is pivotal to fully characterizing their service-rate region. Figure~\ref{fig:connectionWE} illustrates the connection between enumerating all recovery sets, the related enumerator problem, and their relative complexity. The fundamental difficulty of listing or counting these constrained codewords underscores why developing a complete SRR description for RM codes of general parameters remains an open and challenging task. Similarly, determining the sizes of the third-smallest recovery sets is tied to finding the second and higher-order weights of RM codes, a problem that has only been solved in limited cases~\cite{RMWeight:journals/ccds/Rolland10,RMWeight:preprint/arxiv/datta2025}.
\label{rm:Connection_Dual}
\end{remark}

\begin{figure*}
\centering
\begin{tikzpicture}[scale=0.86]
\tikzset{every path/.style={line width=0.6pt}}
\def\tinyspacing{1.03} 
\def\spacing{1.7} 
\def\widespacing{2.8} 
    \foreach \i in {1,...,8} {
        \node[circle,draw,minimum size=0.5cm,,inner sep=2.5pt] (e\i) at ({\i*\spacing+1.1},5) {\small\ifnum\i=1 $S$\fi};
    }

    \foreach \i in {1,...,16} {
        \node[circle,draw,fill=lightgray,minimum size=0.65cm,inner sep=2pt] (s\i) at (\i*\tinyspacing,2) {\small\i};
    }

    \foreach \i in {9,...,12} {
        \node[circle,draw,minimum size=0.5cm,inner sep=2.5pt] (e\i) at ({\i*\widespacing-20.6},-1) {\small\ifnum\i=9 $S$\fi};
    }

    \node[anchor=east] at (0.5,5) {\textcolor{black}{\normalsize{8 recovery sets for $\mathbf{e}_5$}}};
    
    \node[anchor=east] at (0.1,2) {\textcolor{black}{\normalsize{16 columns $\boldsymbol{c}_j$}}};
    
    \node[anchor=east] at (0.5,-1) {\textcolor{black}{\normalsize{4 recovery sets for $\mathbf{e}_{11}$}}};
    
\draw[color=gray][color=gray] (e1) -- (s1) (e1) -- (s2);
\draw[color=gray] (e2) -- (s3) (e2) -- (s4) (e2) -- (s5) (e2) -- (s6) (e2) -- (s7) (e2) -- (s8);
\draw[color=gray] (e3) -- (s3) (e3) -- (s4) (e3) -- (s9) (e3) -- (s10) (e3) -- (s11) (e3) -- (s12);
\draw[color=gray] (e4) -- (s5) (e4) -- (s6) (e4) -- (s9) (e4) -- (s10) (e4) -- (s13) (e4) -- (s14);
\draw[color=gray] (e5) -- (s5) (e5) -- (s6) (e5) -- (s11) (e5) -- (s12) (e5) -- (s15) (e5) -- (s16);
\draw[color=gray] (e6) -- (s3) (e6) -- (s4) (e6) -- (s13) (e6) -- (s14) (e6) -- (s15) (e6) -- (s16);
\draw[color=gray] (e7) -- (s7) (e7) -- (s8) (e7) -- (s9) (e7) -- (s10) (e7) -- (s15) (e7) -- (s16);
\draw[color=gray] (e8) -- (s7) (e8) -- (s8) (e8) -- (s11) (e8) -- (s12) (e8) -- (s13) (e8) -- (s14);

\draw[color=gray] (e9) -- (s1) (e9) -- (s2) (e9) -- (s3) (e9) -- (s4);
\draw[color=gray] (e10) -- (s5) (e10) -- (s6) (e10) -- (s7) (e10) -- (s8);
\draw[color=gray] (e11) -- (s9) (e11) -- (s10) (e11) -- (s11) (e11) -- (s12);
\draw[color=gray] (e12) -- (s13) (e12) -- (s14) (e12) -- (s15) (e12) -- (s16);

\end{tikzpicture}
\caption{Recovery sets for \(\mathbf{e}_5\) and \(\mathbf{e}_{11}\) in $\mathrm{RM}(2,\, 4)$, with edges connecting each column \(\boldsymbol{c}_j,\ j = 1, 2, \dots, 16\) to the recovery sets it belongs to. The sets \(S\) include \(c_1\) and are the smallest recovery sets among all recovery sets for the same unit vector \(\mathbf{e}_j\).}
    \label{fig:RecSets}
\end{figure*}
\begin{figure}
    \centering

\begin{tikzpicture}[node distance=3cm and 2cm, every node/.style={align=center}]
    \node[rectangle, draw, minimum width=2.2cm, minimum height=1cm, fill=gray!10] (A1) {\textsc{Determine $\Sigma$}};
    \node[rectangle, draw, minimum width=2.2cm, minimum height=1cm, fill=gray!10, right=3.7cm] (A2) {\textsc{Determine $\Pi$}};
    \node[rectangle, draw, minimum width=2cm, minimum height=1cm, fill=gray!10, below=1.8cm, xshift=5.16cm] (A3) {\textsc{List all} $\boldsymbol{x} \,\in\, \mathcal{C}^{\perp}$ \textsc{s.t.} $\mathrm{Supp}(\boldsymbol{x})\supseteq S$};

    \draw[->, thick] (A1) -- node[above] {Subsume} (A2);
    \draw[<->, thick] (A2) -- node[right] {Equivalent} (A3);

\end{tikzpicture}
\caption{Connection between specifying $\Sigma$ (the set of all recovery sets for $a_{\sigma^{\ell}}$) and $\Pi$ (the set of recovery sets disjoint from $S$), their relation to the weight enumerator problem, and a comparison of their relative complexity.}
    \label{fig:connectionWE}
\end{figure}

\begin{corollary}\label{Corollary:constrained_minweight_CW}  
We present an interesting result about the codewords of Reed–Muller (RM) codes derived from the theorems above. From the proof of Theorem~\ref{thm:RecoverySet}, we observe that the set of points \( \{P_1, P_2, \dots, P_{2^{\ell}}\} \) forms an \( {\ell} \)-dimensional linear subspace, whose incidence vector is \( \overline{\boldsymbol{v}}_{m} \overline{\boldsymbol{v}}_{m-1} \dots \overline{\boldsymbol{v}}_{m-{\ell}+1} \) (\textcolor{black}{see also the second paragraph in Example~\ref{ex:flat_example}, or the beginning of Example~\ref{ex:RM24}.}). Consequently, the set of indices corresponding to these points, \( S = \{1, 2, \dots, 2^{\ell}\} = [2^{\ell}] \), serves as a recovery set of size \( 2^{\ell} \) for the message symbol \( a_{12\dots {\ell}} \).  

By Theorem~\ref{thm:RecoverySetCount}, the number of minimum-weight dual codewords that include \( S \) in their support is equal to the number of \((r + 1)\)-dimensional linear subspaces that contain the \( {\ell} \)-dimensional linear subspace \( \mathscr{S} \) associated with \( S \). The Gaussian binomial coefficient \( \gaussbinom{m - {\ell}}{r + 1 - {\ell}} \) precisely counts these linear subspaces. Therefore, the number of minimum-weight codewords \( \boldsymbol{x} \) in the dual Reed–Muller code \( \mathrm{RM}(r, m)^\perp = \mathrm{RM}(m - r - 1, m) \) that include the first \( 2^{\ell} \) coordinates in their support is given by \( \gaussbinom{m - {\ell}}{r + 1 - {\ell}} \).  

Equivalently, for each \( {\ell} \,\in\, \{0, 1, \dots, r\} \), the number of minimum-weight codewords \( \boldsymbol{x} \) in the Reed–Muller code \( \mathrm{RM}(r, m) \) that include the first \( 2^{\ell} \) coordinates in their support, \( \{1, 2, \dots, 2^{\ell}\} \subseteq \mathrm{Supp}(\boldsymbol{x}) \), is given by the Gaussian binomial coefficient \( \gaussbinom{m - {\ell}}{m - r - {\ell}} \). For example, in \( \mathrm{RM}(2, 4) \), the number of minimum-weight codewords \( \boldsymbol{x} \) satisfying \( \mathrm{Supp}(\boldsymbol{x}) \supset \{1\} \) is \( \gaussbinom{4}{2} = 35 \); those satisfying \( \mathrm{Supp}(\boldsymbol{x}) \supset \{1, 2\} \) is \( \gaussbinom{3}{1} = 7 \); and those satisfying \( \mathrm{Supp}(\boldsymbol{x}) \supset \{1, 2, 3, 4\} \) is \( \gaussbinom{2}{0} = 1 \).

\end{corollary}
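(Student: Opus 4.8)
The plan is to view this corollary as the special case of Theorem~\ref{thm:RecoverySetCount} obtained by choosing the message symbol whose order-$\ell$ recovery set is the concrete coordinate block $[2^{\ell}]=\{1,2,\dots,2^{\ell}\}$, and then to translate the resulting bijection into a statement about minimum‑weight codewords using Theorem~\ref{thm:MinWeightRM}. The first step is to identify the geometry of $S=[2^{\ell}]$. Running the construction in the proof of Theorem~\ref{thm:RecoverySet} with the tuple $\sigma^{\ell}=12\dots\ell$, the complementary flat $T$ has incidence vector $\boldsymbol{v}_{\ell+1}\boldsymbol{v}_{\ell+2}\cdots\boldsymbol{v}_m$, and its translate $\mathscr{S}=T_1$ by $\boldsymbol{1}_m$ has incidence vector the element-wise product of the $m-\ell$ complemented $(m-1)$-flats $\overline{\boldsymbol{v}}_{\ell+1}\cdots\overline{\boldsymbol{v}}_m$. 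Using the coordinate ordering of Table~\ref{table:points}, the points picked out by this product are precisely $P_1,P_2,\dots,P_{2^{\ell}}$ (those whose first $m-\ell$ coordinates vanish); this set contains the origin $P_1$ and is closed under addition, hence is an $\ell$-dimensional subspace. Consequently $S=[2^{\ell}]$ is a recovery set of size $2^{\ell}$ for the order-$\ell$ symbol $a_{12\dots\ell}$, and $\mathscr{S}$ is the subspace Theorem~\ref{thm:RecoverySet} attaches to it.

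Next I would invoke Theorem~\ref{thm:RecoverySetCount} and, more precisely, Steps~1--3 of its proof: the recovery sets $S'$ of $a_{12\dots\ell}$ of size $2^{r+1}-2^{\ell}$ are in bijection with the $(r+1)$-dimensional subspaces $F\subseteq\mathrm{EG}(m,2)$ containing $\mathscr{S}$, via $S_1:=S\,\cup\,S'=F$, with $\boldsymbol{\chi}(F)$ a minimum-weight codeword of $\mathcal{C}^{\perp}=\mathrm{RM}(m-r-1,m)$. To upgrade this to a count of \emph{all} minimum-weight dual codewords whose support contains $S$, I would argue the reverse inclusion directly: if $\boldsymbol{x}\in\mathrm{RM}(m-r-1,m)$ has weight $2^{r+1}$ and $S\subseteq\mathrm{Supp}(\boldsymbol{x})$, then by Theorem~\ref{thm:MinWeightRM} $\mathrm{Supp}(\boldsymbol{x})$ is an $(r+1)$-flat $F$; since $1\in S\subseteq F$, the flat $F$ passes through the origin and is an $(r+1)$-dimensional subspace, and since it contains every point of $S=\mathscr{S}$ it contains the subspace $\mathscr{S}$. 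Hence minimum-weight dual codewords with $S$ in their support correspond exactly to $(r+1)$-dimensional subspaces of $\mathbb{F}_2^m$ containing $\mathscr{S}$, of which there are $\gaussbinom{m-\ell}{\,r+1-\ell\,}$ by the standard count (an $(r+1-\ell)$-dimensional subspace of the $(m-\ell)$-dimensional quotient $\mathbb{F}_2^m/\mathscr{S}$).

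Finally, the ``equivalently'' clause about $\mathrm{RM}(r,m)$ itself follows by running the identical geometric argument with the code and its dual interchanged: by Theorem~\ref{thm:MinWeightRM} the minimum-weight codewords of $\mathrm{RM}(r,m)$ are incidence vectors of $(m-r)$-flats, so those whose support contains $\{1,\dots,2^{\ell}\}$ are in bijection with $(m-r)$-dimensional subspaces of $\mathbb{F}_2^m$ containing $\mathscr{S}$, counted by $\gaussbinom{m-\ell}{\,m-r-\ell\,}$ (which equals $0$ when $m-r<\ell$, consistent with the fact that no $(m-r)$-flat can then contain $\mathscr{S}$). Specializing to $\mathrm{RM}(2,4)$, i.e. $m=4,r=2$, gives $\gaussbinom{4-\ell}{2-\ell}$, namely $\gaussbinom{4}{2}=35$, $\gaussbinom{3}{1}=7$, and $\gaussbinom{2}{0}=1$ for $\ell=0,1,2$. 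The only genuinely delicate point is the very first one --- verifying that the explicit block $[2^{\ell}]$ coincides with the subspace $\mathscr{S}$ produced by the proof of Theorem~\ref{thm:RecoverySet}, which requires carefully tracking the coordinate/point ordering conventions; once that identification is secured, the remainder is a routine transcription of Theorem~\ref{thm:RecoverySetCount} together with Gaussian-binomial bookkeeping.
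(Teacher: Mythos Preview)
Your proposal is correct and follows essentially the same route as the paper: identify $[2^{\ell}]$ with the subspace $\mathscr{S}$ produced by the proof of Theorem~\ref{thm:RecoverySet} for the symbol $a_{12\dots\ell}$, then apply Theorem~\ref{thm:RecoverySetCount} together with Theorem~\ref{thm:MinWeightRM} to count the $(r+1)$-dimensional subspaces containing $\mathscr{S}$, and finally swap the roles of the code and its dual. Your write-up is in fact slightly more explicit than the paper's in one place --- you spell out why any minimum-weight dual codeword with $S$ in its support must be the incidence vector of an $(r+1)$-\emph{subspace} (not just a flat) containing $\mathscr{S}$, using $P_1\in S$ --- a step the corollary leaves implicit.
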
  
\section{Service Rate Region of Reed--Muller Codes}\label{sec:Service_rate}
In this section, we leverage the results established in previous sections to analyze the SRR of Reed--Muller codes in distributed storage systems. Specifically, we derive explicit bounds on the maximal achievable demand for individual data objects (formally defined below). These results are grounded in the properties of recovery sets and their connections to the dual code. Additionally, we define the maximal achievable simplex, in which all request rates are achievable, and establish bounds on aggregate rates for data objects associated with message symbols of the same order. These findings have direct implications for the design of efficient and scalable distributed storage systems.

For each \( j \,\in\, [k] = \left\{1, 2, \dots, \sum\limits_{i=0}^r\binom{m}{i}\right\} \), we define the \emph{axis intercept}
\[
\lambda_j^{\text{int}} := \max\{\gamma \,\in\, \mathbb{R} \,|\, \gamma \cdot \mathbf{e}_j \,\in\, \mathcal{S}(\mathbf{G})\},
\]
i.e., the maximum achievable rate for $o_\ell$ when all other demands are zero. Next, for each $j\,\in\,[k]$ define the \emph{coordinate-wise maximum}
\[
   \lambda_{j}^{\max}\; :=\;
   \max\bigl\{\lambda_j \mid \boldsymbol{\lambda}\,\in\,\mathcal S(\mathbf{G})\bigr\},
\]
i.e., the largest request rate for object $o_j$ achievable while other objects may also receive traffic. It was proved in~\cite{SRR:lySV2025} that
\(\lambda_{j}^{\mathrm{int}}=\lambda_{j}^{\max}, \ \forall\, j\,\in\, [k]\). Consequently, dedicating the entire system to object \( o_j \) permits serving at most \( \lambda_{j}^{\max} \) requests. We refer to this quantity as the \emph{maximal achievable demand} (or maximal achievable rate) for object \( j \). Practically, \( \lambda_{j}^{\max} \) represents the highest individual request rate for object \( o_j \) that the system can support in isolation. Characterizing these values is particularly relevant in distributed storage systems, where object demands and popularities are usually skewed~\cite{SRR:journals/tit/AktasJKKS21}.

Define the simplex \( \mathcal{A} \) as:
\(
\mathcal{A} := \text{conv}\Bigl(\bigl\{\boldsymbol{0}_k, \lambda_1^{\text{max}}\mathbf{e}_1, \lambda_2^{\text{max}}\mathbf{e}_2, \dots, \lambda_k^{\text{max}}\mathbf{e}_k\bigr\}\Bigr),
\)
where \( \text{conv}(\mathcal{T}) \) denotes the convex hull of the set \( \mathcal{T} \), defined as \( \mathcal{T} = \{\boldsymbol{v}_1, \dots, \boldsymbol{v}_p\} \subset \mathbb{R}^k \). Specifically, \( \text{conv}(\mathcal{T}) \) consists of all convex combinations of the elements in \( \mathcal{T} \), i.e., all vectors of the form
\[
\sum_{i=1}^p \gamma_i \boldsymbol{v}_i, \quad \text{where } \gamma_i \ge 0 \text{ and } \sum_{i=1}^p \gamma_i = 1,
\]
as described in detail in~\cite{ConvexAnalysis:books/Rockafellar70}.
From Lemma~\ref{lem:convexity}, we know that the service polytopes are convex. Consequently, \( \mathcal{A} \subseteq \mathcal{S}(\mathbf{G}) \). This implies that all points within the simplex \( \mathcal{A} \) are achievable, and we refer to it as the \textit{Maximal achievable simplex}. Practically, it represents the largest simplex fully contained within the SRR, serving as a first approximation of the region. For this reason, its characterization is of significant interest. The following theorem characterizes the maximal achievable demands.

\begin{theorem}\label{thm:maximal_achievable_simplex}
    For each \( j \,\in\, [k] \), 
let $\ell$ be the order of object $j$ determined by~\eqref{eq:object_range}. Then, the maximum achievable demand for \( \mathbf{e}_j \) is 
\begin{align}
\label{eq:max_demand}
    \lambda_j^{\text{max}} = 1 + \dfrac{\gaussbinom{m - \ell}{r - \ell + 1}}{\gaussbinom{m - \ell - 1}{r - \ell}} = 1 + \dfrac{2^m - 2^{\ell}}{2^{r+1} - 2^{\ell}}.
\end{align}
\end{theorem}

\begin{proof}
The proof follows a standard converse-and-achievability argument.
\begin{enumerate}
\item \textbf{Upper Bound (Converse)} \\
Let \( I = \{j\} \), and let \( \Gamma' \) denote the \( I \)-induced subgraph of \( \Gamma_{\mathbf{G}} \). By Theorem~\ref{thm:RecoverySet} and the proof of it, and Theorem~\ref{thm:RecoverySetSize}, the edges in \( \Gamma' \) satisfy the following:
\begin{itemize}
    \item There is one edge \( \epsilon \) of size \( 2^{\ell} \) that contains the vertex node associated with \( \boldsymbol{c}_1 \).
    \item Any other edge \( \eta \ne \epsilon \) has size at least \( 2^{r+1} - 2^{\ell} \). If \( \eta \) has size exactly \( 2^{r+1} - 2^{\ell} \), then \( \eta \,\cap\, \epsilon = \emptyset \). If \( \eta \) has size strictly greater than \( 2^{r+1} - 2^{\ell} \), then \( |\eta \setminus \epsilon| \ge 2^{r+1} - 2^{\ell} \) (by~\eqref{eq:OutsideElements}).
\end{itemize}
Define a fractional vertex cover by assigning weights to the vertex nodes as follows:
\[
w_v =
\begin{cases}
    1, & \text{if } v \text{ is the node associated with } \boldsymbol{c}_1, \\
    0, & \text{if } v \,\in\, \epsilon \text{ and } v \ne \boldsymbol{c}_1, \\
    \dfrac{1}{2^{r+1} - 2^{\ell}}, & \text{if } v \notin \epsilon.
\end{cases}
\]
By the edge-size structure of \( \Gamma' \), this weight assignment satisfies all edge-cover constraints. In other words, this weight assignment gives us a valid vertex cover with size:
\[
\sum_{v} w_v = 1 + (n - 2^{\ell}) \cdot \frac{1}{2^{r+1} - 2^{\ell}} = 1 + \frac{2^m - 2^{\ell}}{2^{r+1} - 2^{\ell}}.
\]
Applying the Proposition~\ref{prop:sum_bound} on the \( I \)-induced subgraph of \( \Gamma_{\mathbf{G}} \), we obtain the desired upper bound.

Alternatively, the same bound can be derived using a \textit{capacity argument}. To minimize total server usage, we consider assigning one unit of demand to the smallest recovery set (size \( 2^{\ell} \)), and the remaining \( \lambda_j - 1 \) units to other recovery sets, each of size at least \( 2^{r+1} - 2^{\ell} \). This yields a total server usage of at least:
   \[
    2^{\ell}\cdot 1 + (\lambda_j - 1)(2^{r+1} - 2^{\ell}).
    \]
    Since the total server capacity is \( n = 2^m \), the feasibility condition implies:
    \[
    2^{\ell} + (\lambda_j - 1)(2^{r+1} - 2^{\ell}) \le 2^m.
    \]
    Solving for \( \lambda_j \), we obtain:
    \[
    \lambda_j \le 1 + \frac{2^m - 2^{\ell}}{2^{r+1} - 2^{\ell}}.
    \]
    Since this bound holds for all achievable values, it applies in particular to \( \lambda_j^{\max} \). This completes the proof of the upper bound.
\item \textbf{Achievability} \\
        To show that the upper bound is achievable, we construct an allocation of demands that attains it.
        
        Let \( R_{j, 1} \) be the unique recovery set of size \( 2^{\ell} \) for \( \mathbf{e}_j \). Denote by \( t = \gaussbinom{m - \ell}{r - \ell + 1} \) the number of additional recovery sets of size exactly \( 2^{r+1} - 2^{\ell} \), which we label as \( R_{j,2}, R_{j,3}, \dots, R_{j,t+1} \). Note that none of them overlap with $R_{j, 1}$. \textcolor{black}{In the special case \( \ell = r \), all recovery sets have the same size \( 2^r \). The set \( R_{j,1} \) is distinguished in that its associated point set contains the origin \( P_{1} \) and forms an \( r \)-dimensional linear subspace \( \mathscr{S} \). The remaining recovery sets correspond to \( r \)-flats in \( \mathrm{EG}(m,2) \), each obtained as an affine translation of \( \mathscr{S} \) in $\mathrm{EG}(m, 2)$. (See the example in~\eqref{eq:example_RM24}.)}
        
        We assign the demands $\lambda_{j, i}$ to recovery set $R_{j, i}$ as follows:
        \[
        \lambda_{j,1} = 1, \quad \lambda_{j,k} = \dfrac{1}{\gaussbinom{m - \ell - 1}{r - \ell}} \quad \text{for } k = 2, 3, \dots, t+1.
        \]
        This assignment ensures that each recovery set of size \( 2^{r+1} - 2^{\ell} \) receives a demand of \( \frac{1}{\gaussbinom{m - \ell - 1}{r - \ell}} \).
        
        According to Remark~\ref{restate:recovery_set}, each node \( x_h \notin S \) is contained in exactly \( \gaussbinom{m - \ell - 1}{r - \ell} \) recovery sets. Therefore, the total demand assigned to any such node is:
        \[
        \sum_{\substack{k=2, \\\text{node $h$ in set $R_{j, k}$}}}^{t+1}
        \lambda_{j,k} = \dfrac{\gaussbinom{m - \ell - 1}{r - \ell}}{\gaussbinom{m - \ell - 1}{r - \ell}} = 1.
        \]
        Thus, we do not use more than 100\% of any node, i.e., the constraint in~\eqref{eq:capacity_constraint} is not violated. Moreover, the total demand serviced by this allocation is:
        \begin{align*}
         \lambda_j = \lambda_{j, 1} + \sum\limits_{k=2}^{t+1}\lambda_{j, k} = 1 + \dfrac{t}{\gaussbinom{m-\ell-1}{r-\ell}} = 1+ \dfrac{\gaussbinom{m-\ell}{r-\ell+1}}{\gaussbinom{m-\ell-1}{r-\ell}}.
        \end{align*}
        Note that 
        \begin{align*}
            \dfrac{\gaussbinom{m-\ell}{r-\ell+1}}{\gaussbinom{m-\ell-1}{r-\ell}} = \dfrac{\prod\limits_{i=0}^{r-\ell}\dfrac{1-2^{m-\ell-i}}{1-2^{i+1}}}{\prod\limits_{i=0}^{r-\ell-1}\dfrac{1-2^{m-\ell-1-i}}{1-2^{i+1}}} = \dfrac{2^m-2^{\ell}}{2^{r+1}-2^{\ell}},
        \end{align*}
         and thus the upper bound in~\eqref{eq:max_demand} is achievable.
    \end{enumerate}
\end{proof}
\begin{remark}\label{rm:increasing_rate}
Note that the Achievability part above can also be established using Theorem~2 in~\cite{SRR_Design:lySV2025}. Also, the theorem above shows that all objects associated with message symbols of the same degree have equal maximal achievable demands. Moreover, for fixed values of $r$ and $m \ge r+1$, observe that $f(\ell) \coloneq \dfrac{2^m-2^{\ell}}{2^{r+1} - 2^{\ell}}$ is an increasing function and $f(0) = \dfrac{2^m-1}{2^{r+1}-1} \ge 2^{m-r-1}$, we have:
    \begin{align*}
    1 + 2^{m-r-1} \le \lambda_1^{\text{max}} 
    \le \lambda_2^{\text{max}} = \lambda_3^{\text{max}} = \dots 
    = \lambda_{m+1}^{\text{max}} \le \lambda_{m+2}^{\text{max}} = \dots 
    \le \lambda_k^{\text{max}} = 2^{m-r} = d_{\mathrm{RM}(r, m)}.
\end{align*}
    We observe that objects associated with higher-order symbols have a greater maximal achievable demand, meaning they can be supported at higher demand. This implies that these objects exhibit higher availability, \textcolor{black}{suggesting that more frequently accessed or critical data should be encoded as higher-order RM symbols}.
\end{remark}

From Remark~\ref{rm:increasing_rate}, we see that in the SRR of $r$-th order RM code,
\[
1+2^{m-r-1} \le \lambda_j^{\text{max}} \le 2^{m-r}, \quad \forall\, j.
\]
Comparing the SRR of the $r$-th order RM code with parameters $[2^m, \sum_{j=0}^r \binom{m}{j}, 2^{m-r}]$ and the Simplex code (which is essentially a first-order $\mathrm{RM}(1,\, m)$ code) with approximately the same number of servers, having parameters $[2^m - 1, m, 2^{m-1}]$, we recall that in Simplex code, maximal achievable demand for each individual data object is~\cite{SRR:conf/isit/KazemiKS20,SRR:journals/tit/AktasJKKS21}:
\[
\lambda_j^{\text{max}} = 2^{m-1}, \quad \forall\, j.
\]
Clearly, when $r > 1$, we have $2^{m-1} > 2^{m-r}$, indicating that the maximal achievable demand for each data object in systems using the Simplex code is larger than that in systems using $r$-th order RM codes with (approximately) a similar number of servers. However, this improvement is achieved at the expense of fewer encoded data objects in storage ($m$ vs. $\sum_{j=0}^r \binom{m}{j}$), i.e., a lower code rate.
\textcolor{black}{From a system design perspective, the order \( r \) allows us to balance the trade-off between service availability and storage efficiency. When the system (with fixed number of servers $n$) needs to support very high demand for specific data objects---often referred to as popular content---Simplex codes (or RM codes with \( r=1 \)) are the optimal choice. Their high maximal service rate of \( 2^{m-1} \) ensures they can handle heavy traffic, justifying the fact that they store fewer data objects. Conversely, when the priority is to store as many data objects as possible on a fixed number of servers, higher-order RM codes (\( r > 1 \)) are preferable. They significantly increase the number of stored objects, \( k \), while maintaining robust service capacity. Therefore, a practical resource allocation strategy is to use low-order RM codes for a small number of popular objects to maximize individual service rates for these objects, and higher-order RM codes for less frequently accessed objects to optimize storage efficiency (i.e., the number of encoded data objects).}

    For each data symbol order \( \ell \,\in\, \{0, 1, \dots, r\} \) defined as in~\eqref{eq:data_indices}, we have the following bound that holds for the sum of demands \( \lambda_j \) of such objects.
\begin{theorem}\label{thm:sum_bound_same_order} In an $\mathrm{RM}(r,\,m)$-coded storage system, the total request rate for all objects associated with message symbols of the same order $\ell$, for each $\ell \,\in\, \{0, 1, \dots, r\}$, is upper bounded by:
    \begin{align*}
    \sum\limits_{j = p(\ell)}^{q(\ell)} \lambda_j & \leq 1 + \dfrac{2^m - 2^\ell}{2^{r+1} - 2^\ell}\\
    & = \lambda_i^{\max}, \quad \forall\, i \in [p(\ell),\, q(\ell)].
    \end{align*}
    Moreover, this bound is tight, meaning that there exists an allocation of demands \( \{\lambda_j\} \) for which equality is achieved.
\end{theorem}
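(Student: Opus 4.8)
The plan is to prove the two directions separately: a converse by a global server\nobreakdash-capacity accounting that exploits the common vertex of the small recovery sets, and achievability by simply invoking Theorem~\ref{thm:maximal_achievable_simplex}.

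For the converse, the key structural input is that, by Theorem~\ref{thm:RecoverySet} (and its proof), \emph{every} object $j$ of order $\ell$ has a unique size\nobreakdash-$2^\ell$ recovery set $S_j$, and \emph{all} of these sets contain the vertex corresponding to the column $\boldsymbol{c}_1$ (the origin $P_1$), while by Theorem~\ref{thm:RecoverySetSize} every other recovery set for such an object has size at least $2^{r+1}-2^\ell$. Given an achievable $\boldsymbol{\lambda}$ together with a realizing fractional matching, I would split, for each $j \,\in\, [p(\ell),q(\ell)]$, the demand as $\lambda_j = \alpha_j + \beta_j$, where $\alpha_j$ is the amount routed through $S_j$ and $\beta_j$ the amount routed through the larger recovery sets. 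Writing $A = \sum_j \alpha_j$ and $B = \sum_j \beta_j$, the capacity\nobreakdash-$1$ constraint at the common vertex $\boldsymbol{c}_1$ gives $A \le 1$, while bounding the total per\nobreakdash-server load ($\sum_{j,k}|R_{j,k}|\,w_{R_{j,k}} \le n = 2^m$) and discarding the nonnegative contributions of all objects not of order $\ell$ yields $2^\ell A + (2^{r+1}-2^\ell)B \le 2^m$. Maximizing $A+B$ over this pair of linear constraints is immediate: since $\ell \le r$ forces $2^\ell \le 2^{r+1}-2^\ell$, the optimum is at $A=1$, and substituting gives $\sum_j \lambda_j = A + B \le 1 + \dfrac{2^m-2^\ell}{2^{r+1}-2^\ell}$, as claimed. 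Equivalently, one may phrase this through Proposition~\ref{prop:sum_bound} by exhibiting a fractional vertex cover of the $[p(\ell),q(\ell)]$\nobreakdash-induced subgraph of this size; the capacity argument is the same computation without the LP language.

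For achievability, I would note that the bound is already met by concentrating all traffic on a single object: by Theorem~\ref{thm:maximal_achievable_simplex}, for any $i \,\in\, [p(\ell),q(\ell)]$ the vector $\lambda_i^{\max}\mathbf{e}_i$ lies in $\mathcal{S}(\mathbf{G},\boldsymbol{1})$ with $\lambda_i^{\max} = 1 + \dfrac{2^m-2^\ell}{2^{r+1}-2^\ell}$, and for this vector $\sum_{j=p(\ell)}^{q(\ell)}\lambda_j = \lambda_i^{\max}$ coincides with the right\nobreakdash-hand side. Hence the inequality is tight. This also records the conceptual point that all objects of a fixed order share one effective budget equal to any one of their individual maxima, precisely because their cheapest recovery sets all pass through $\boldsymbol{c}_1$.

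The step needing care is the converse accounting. The only genuine subtlety is observing that the $\binom{m}{\ell}$ smallest recovery sets of the order\nobreakdash-$\ell$ objects, being distinct $\ell$\nobreakdash-subspaces through the origin, have the vertex $\boldsymbol{c}_1$ in common; this is what caps $A$ at $1$, and without it the two constraints alone would not force the stated bound. One should also be mildly careful that a recovery set of size strictly larger than $2^{r+1}-2^\ell$ may itself contain $\boldsymbol{c}_1$, but this only strengthens the vertex\nobreakdash-$\boldsymbol{c}_1$ inequality and is harmless in the capacity inequality, where such a set is bounded below by $2^{r+1}-2^\ell$ regardless. I do not anticipate any serious difficulty beyond organizing this bookkeeping; in particular, the potentially intricate overlap structure of the large recovery sets of different order\nobreakdash-$\ell$ objects is never invoked.
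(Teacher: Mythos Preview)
Your proposal is correct and follows essentially the same route as the paper: both prove the converse by a capacity argument that splits each order-$\ell$ demand into the portion routed through its unique size-$2^\ell$ recovery set (all of which contain $\boldsymbol{c}_1$, capping their total at $1$) and the remainder routed through sets of size at least $2^{r+1}-2^\ell$, and both establish achievability by invoking Theorem~\ref{thm:maximal_achievable_simplex} at a single coordinate. Your explicit two-constraint LP formulation of the converse is, if anything, a slightly crisper presentation of the same computation the paper carries out informally.
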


\begin{proof}
    The proof can be found in Appendix~\ref{app:proof_theorem8}.
\end{proof}

\begin{remark}
    Theorem~\ref{thm:sum_bound_same_order} implies the followings:
    \begin{itemize}
    \item \textbf{Aggregate Bound for Order-\( \ell \) Symbols:} \\
    The total service rate across all data objects of order \( \ell \), i.e., those indexed from \( p(\ell) \) to \( q(\ell) \), cannot exceed the individual maximum achievable demand \( \lambda_i^{\max} \) for any such object. This reflects a global constraint on how much cumulative demand the system can support for symbols of the same order:
    \[
    \sum_{j = p(\ell)}^{q(\ell)} \lambda_j \le \lambda_i^{\max}, \quad \forall\, i \in [p(\ell),\, q(\ell)].
    \]

    \item \textbf{Shared Resource Limitation:} \\
    Although each object of order \( \ell \) individually can be served up to \( \lambda_i^{\max} \), the system lacks sufficient capacity to serve all such objects at that maximum simultaneously. The inequality implies a \emph{resource-sharing} limitation among objects of the same order.
\end{itemize}
\end{remark}
\begin{theorem}\label{thm:total_sum_bound}
 For each \( {\ell} \,\in\, \{0, 1, \dots, r\} \), the total service rate of all data objects of order at most \( \ell \) is upper bounded by:
\begin{align}
\sum\limits_{j=1}^{q({\ell})} \lambda_j\le 1+\dfrac{2^m -1}{2^{r+1}-2^{\ell}}.
\end{align}
\end{theorem}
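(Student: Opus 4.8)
This is purely a converse, and I would prove it in the spirit of Theorem~\ref{thm:sum_bound_same_order}, by a \emph{two-budget capacity argument}: one budget is the total server capacity $n = 2^m$, the other is the unit capacity of the single server $\boldsymbol{c}_1$. The structural input is Theorems~\ref{thm:RecoverySet}, \ref{thm:ReedDecoding}, \ref{thm:RecoverySetSize} (equivalently Remark~\ref{restate:recovery_set}): every object $j$ with $1 \le j \le q(\ell)$ has some order $\ell_j \le \ell$; it owns a recovery set $S_j$ with $1 \in S_j$ and $|S_j| = 2^{\ell_j}$ (the unique sub-$2^r$ recovery set when $\ell_j < r$, or the $r$-flat through the origin when $\ell_j = r$); and \emph{every} other recovery set of $\mathbf{e}_j$ has size at least $2^{r+1} - 2^{\ell_j} \ge 2^{r+1} - 2^{\ell}$. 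The key observation is that all the $S_j$, one per object, share the index $1$, i.e.\ they all consume server $\boldsymbol{c}_1$.

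Next I would fix an arbitrary achievable $\boldsymbol{\lambda}$ with a feasible allocation $\{\lambda_{i,j'}\}$ and, for each $j \in [q(\ell)]$, let $\lambda_{j,1}$ be the portion of $\lambda_j$ routed through $S_j$. The server-$\boldsymbol{c}_1$ constraint in~\eqref{eq:capacity_constraint} gives $A := \sum_{j=1}^{q(\ell)} \lambda_{j,1} \le 1$. Summing~\eqref{eq:capacity_constraint} over all $n$ servers yields $\sum_{i,j'} \lambda_{i,j'}\,|R_{i,j'}| \le n = 2^m$, hence in particular $\sum_{j=1}^{q(\ell)}\sum_{j'}\lambda_{j,j'}\,|R_{j,j'}| \le 2^m$; lower-bounding the left side by replacing $|S_j|$ with $1$ and every other set size with $2^{r+1}-2^{\ell}$ gives
\[
A + \bigl(2^{r+1}-2^{\ell}\bigr)\Bigl(\textstyle\sum_{j=1}^{q(\ell)}\lambda_j - A\Bigr) \;\le\; 2^m .
\]
With $T := \sum_{j=1}^{q(\ell)}\lambda_j$ and $D := 2^{r+1}-2^{\ell} \ge 1$, this is $T \le A + (2^m - A)/D$; the right side is nondecreasing in $A$ since $D \ge 1$, so maximizing over $A \in [0,1]$ at $A=1$ yields $T \le 1 + (2^m-1)/(2^{r+1}-2^{\ell})$, as claimed.

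I expect the one delicate point to be the endpoint case $\ell = r$ (where $q(\ell) = k$, so one is bounding the whole sum rate): objects of order $r$ then have \emph{all} recovery sets of the same size $2^r = 2^{r+1}-2^{\ell}$, so none is strictly smaller; the fix is simply to let $S_j$ be the unique order-$r$ recovery set containing server $\boldsymbol{c}_1$ and to notice the inequalities above only used $|S_j| \ge 1$ together with the $\boldsymbol{c}_1$-sharing, never $|S_j| < 2^{r+1}-2^{\ell}$. One should also record $D = 2^{r+1}-2^{\ell} \ge 1$, with equality only at $\ell = r = 0$, where the bound reads $2^m$ and is attained by the repetition code $\mathrm{RM}(0,m)$, so the monotonicity step is harmless. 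A fully equivalent alternative, matching the vertex-cover style of Theorem~\ref{thm:maximal_achievable_simplex}, is to apply Proposition~\ref{prop:sum_bound} to the $[q(\ell)]$-induced subgraph with the fractional cover $w_{\boldsymbol{c}_1} = 1 - 1/D$ and $w_v = 1/D$ on every remaining vertex (including the auxiliary vertex $\boldsymbol{0}_k$), which has total weight exactly $1 + (2^m-1)/D$; checking that it covers every hyperedge is the same small-versus-large case split.
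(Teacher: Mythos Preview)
Your proof is correct. The paper actually takes your \emph{alternative} route: it applies Proposition~\ref{prop:sum_bound} to the $[q(\ell)]$-induced subgraph with the fractional vertex cover $w_1=1$ and $w_j=1/(2^{r+1}-2^{\ell})$ for $j>1$ (placing no weight on the auxiliary vertex; this gives the same total as your variant with $w_{\boldsymbol{c}_1}=1-1/D$). Your primary two-budget capacity argument is the LP dual of that cover and mirrors the style the paper itself uses in the converses of Theorems~\ref{thm:maximal_achievable_simplex} and~\ref{thm:sum_bound_same_order}; it is a bit more self-contained because it does not invoke the matching/cover machinery of Proposition~\ref{prop:sum_bound}, whereas the cover argument is one line once that proposition is in hand. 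Your explicit handling of the $\ell=r$ endpoint (choosing $S_j$ to be the unique size-$2^r$ set through $\boldsymbol{c}_1$, and noting only $|S_j|\ge 1$ is used) and of the degenerate $D=1$ case is more careful than the paper, which leaves these implicit.
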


\begin{proof}
    Let \( I = \{j \mid 1\le j \le q({\ell})\} \) and let \(\Gamma'\) be the \(I\)-induced subgraph of \(\Gamma_{\mathbf{G}}\). By Theorems~\ref{thm:RecoverySet} and~\ref{thm:RecoverySetSize}, the edges in \(\Gamma'\) satisfy one of the following conditions:
    \begin{itemize}
        \item They have size \emph{at least} \(2^{r+1} - 2^{\ell}\), or
        \item They have size \emph{at most} \(2^{\ell}\) and contain the vertex node associated with \(\boldsymbol{c}_1\).
    \end{itemize}
    Assign weights to vertices as follows:
    \[
        w_1 = 1,\,\, \text{and}\quad w_j = 1/(2^{r+1} - 2^{\ell}),\, \forall\, j > 1.
    \]
    From the edge size properties of \(\Gamma'\), this assignment forms a valid fractional vertex cover whose size is:
    \[
    \sum\limits_{j=1}^n w_j = 1 + (n-1)\cdot\dfrac{1}{2^{r+1} - 2^{\ell}} = 1 + \dfrac{2^m-1}{2^{r+1} - 2^{\ell}} 
    \]
Applying the Proposition~\ref{prop:sum_bound} on the \( I \)-induced subgraph of \( \Gamma_{\mathbf{G}} \), we obtain the desired upper bound.
\end{proof}
\begin{corollary}\label{corollary:enclosing_simplex}
When ${\ell} = r$, Theorem~\ref{thm:total_sum_bound} provides the following neat bound on the heterogeneous sum rate:
\begin{align} 
    \sum\limits_{j=1}^{k} \lambda_j \le 1 + \frac{2^m - 1}{2^r} < 1 + 2^{m-r}.
\end{align}
Thus, the inequality
\[
\sum_{j=1}^{k} \lambda_j \le 1 + 2^{m-r}
\]
defines a simplex $\Omega := \text{conv}\Bigl(\bigl\{\boldsymbol{0}_k, h\mathbf{e}_1, h\mathbf{e}_2, \dots, h\mathbf{e}_k\bigr\}\Bigr)$ that strictly encloses the service region, where $h = 1+2^{m-r}$.
\end{corollary}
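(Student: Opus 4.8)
The plan is to read Corollary~\ref{corollary:enclosing_simplex} as a direct specialization of Theorem~\ref{thm:total_sum_bound} followed by two elementary observations (an arithmetic inequality and the fact that a clipped orthant is a simplex). First I would set $\ell = r$ in Theorem~\ref{thm:total_sum_bound}. Then $q(r) = \sum_{i=0}^{r}\binom{m}{i} = k$, so the left-hand side of its bound is exactly the full heterogeneous sum $\sum_{j=1}^{k}\lambda_j$, while on the right-hand side $2^{r+1} - 2^{\ell} = 2^{r+1} - 2^{r} = 2^{r}$. The theorem then gives
\[
\sum_{j=1}^{k}\lambda_j \;\le\; 1 + \frac{2^{m}-1}{2^{r}}
\]
for every $\boldsymbol{\lambda}\in\mathcal{S}(\mathbf{G},\boldsymbol{1})$. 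Since $2^{m}-1 < 2^{m}$, dividing by $2^{r}$ yields $\tfrac{2^{m}-1}{2^{r}} < 2^{m-r}$, hence $1 + \tfrac{2^{m}-1}{2^{r}} < 1 + 2^{m-r}$, which is the displayed chain of inequalities.

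Next I would identify $\Omega := \{\boldsymbol{\lambda}\in\mathbb{R}_{\ge 0}^{k} : \sum_{j=1}^{k}\lambda_j \le 1 + 2^{m-r}\}$ with a $k$-dimensional simplex. Writing $c := 1 + 2^{m-r} > 0$, I claim $\Omega = \textsf{conv}\bigl(\{\boldsymbol{0}_k,\, c\,\mathbf{e}_1,\dots,c\,\mathbf{e}_k\}\bigr)$: the ``$\supseteq$'' inclusion is immediate because each listed vertex meets the two defining constraints, and for ``$\subseteq$'' any $\boldsymbol{\lambda}\in\Omega$ is the convex combination $\sum_{j}\tfrac{\lambda_j}{c}\,(c\,\mathbf{e}_j) + \bigl(1-\sum_{j}\tfrac{\lambda_j}{c}\bigr)\boldsymbol{0}_k$, whose coefficients are nonnegative and sum to one precisely because $\lambda_j \ge 0$ and $\sum_j \lambda_j \le c$. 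Combining this with the bound of the previous paragraph, every $\boldsymbol{\lambda}\in\mathcal{S}(\mathbf{G},\boldsymbol{1})$ lies in $\mathbb{R}_{\ge 0}^{k}$ and has $\sum_j\lambda_j < c$, so $\mathcal{S}(\mathbf{G},\boldsymbol{1}) \subseteq \Omega$.

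To upgrade this to strict containment I would exhibit a point of $\Omega$ that is not achievable. The natural candidate is the vertex $\boldsymbol{\lambda}^{\star} := c\,\mathbf{e}_k = (1 + 2^{m-r})\mathbf{e}_k$: since object $o_k$ has order $r$, Theorem~\ref{thm:maximal_achievable_simplex} (or Remark~\ref{rm:increasing_rate}) gives $\lambda_k^{\max} = 1 + \tfrac{2^{m}-2^{r}}{2^{r+1}-2^{r}} = 2^{m-r} < c$, so $\boldsymbol{\lambda}^{\star}\notin\mathcal{S}(\mathbf{G},\boldsymbol{1})$ while $\boldsymbol{\lambda}^{\star}\in\Omega$; hence $\mathcal{S}(\mathbf{G},\boldsymbol{1}) \subsetneq \Omega$. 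No step here is genuinely hard; the only points demanding a little care are checking that the clipped orthant is a true simplex (not merely a polytope) and confirming that the strict gap $\tfrac{2^{m}-1}{2^{r}} < 2^{m-r}$ is realized by an honest infeasible point, which the vertex $\boldsymbol{\lambda}^{\star}$ supplies.
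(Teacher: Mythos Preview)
Your proposal is correct and follows the same approach as the paper: specialize Theorem~\ref{thm:total_sum_bound} at $\ell=r$, use $q(r)=k$ and $2^{r+1}-2^r=2^r$, then note $(2^m-1)/2^r<2^{m-r}$. The paper treats the corollary as essentially self-evident from this substitution, whereas you go further by explicitly verifying that $\Omega$ is a simplex and by exhibiting the infeasible vertex $(1+2^{m-r})\mathbf{e}_k$ to certify strict containment; these extra checks are sound and welcome.
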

Recalling from Remark~\ref{rm:increasing_rate} that
\[
1+ 2^{m-r-1} \le \lambda_j^{\text{max}} \le 2^{m-r}, \quad \forall\, j,
\]
and noting that
\[2 > \frac{1 + 2^{m-r}}{1+ 2^{m-r-1}} > \frac{1 + 2^{m-r}}{2^{m-r}} \approx 1,\]
we conclude from Corollary~\ref{corollary:enclosing_simplex} that the enclosing simplex $\Omega$ is at most a factor of 2 larger than the maximal achievable simplex $\mathcal{A}$, regardless of the RM code parameters. This bound of 2 is significantly tighter than the worst-case factor-$k$ gap observed in systematic MDS codes~\cite{SRR:lySV2025}. In other words, for RM codes, the two simplices provide a much closer approximation of the SRR. \textcolor{black}{Example~\ref{ex:MDS_RM_compare} compares the two bounding simplices of the SRR of $\mathrm{RM}(1,\, 3)$ with those of two MDS codes sharing the same parameters $(n, k)$--one systematic and the other non-systematic.}

\textcolor{black}{Nevertheless, comparing the SRR across different code families is not straightforward, as the criteria for a fair comparison remain open to debate. It is unclear, for instance, whether one should compare codes of the same blocklength $n$, the same dimension $k$, or both. Even for codes sharing identical parameters, the choice of encoding---such as systematic versus non-systematic forms for MDS codes---yields significantly different service regions. Furthermore, differences in code alphabets complicate the analysis: while MDS codes are typically defined over sufficiently large fields, the RM codes investigated here are binary. This distinction leads to disparities in decoding complexity and implementation costs, factors which must also be considered for a holistic comparison.}

\begin{exmp}
\textcolor{black}{Fig.~\ref{fig:SRR_RM12} depicts the maximal achievable simplex $\mathcal{A}$ and minimal enclosing simplex $\Omega$ for the $\mathrm{RM}(1, 3)$ code projected onto the $(\lambda_1, \lambda_2, \lambda_3)$ subspace. These are shown alongside the corresponding simplices for systematic and non-systematic MDS codes of the same code parameters ($n=8, k=4$). Note that since the full SRRs reside in $\mathbb{R}^4$, their visualization requires projection onto a 3-dimensional subspace.
\begin{itemize}
    \item Fig.~\ref{fig:left}: The service region of $\mathrm{RM}(1,\,3)$ is sandwiched between $\mathcal{A}$---which intersects the axes of the subspace $(\lambda_1, \lambda_2, \lambda_3)$ at $\left(\dfrac{10}{3}, 4, 4\right)$---and $\Omega$, defined by $\sum_{j=1}^4\lambda_j \le 4.5$.
    \item Fig.~\ref{fig:mid}: The service region of a non-systematic $(8, 4)$ MDS code coincides exactly with both $\mathcal{A}$ and $\Omega$, as both are described by $\sum_{j=1}^4\lambda_j \le 2$. While this region is perfectly characterized by its bounding simplices, it yields a strictly smaller SRR than $\mathrm{RM}(1, 3)$.
    \item Fig.~\ref{fig:right}: In contrast, the service region of a systematic $(8, 4)$ MDS code is strictly sandwiched between $\mathcal{A}$ $\left(\sum_{j=1}^4\lambda_j \le \dfrac{11}{4}\right)$ and $\Omega$ \big($\sum_{j=1}^4\lambda_j \le 5$\big)~\cite{SRR:lySV2025}. Consequently, the region is less tightly characterized by these simplices than in the previous cases. Furthermore, the actual SRR of the systematic MDS code is not nested within that of $\mathrm{RM}(1, 3)$, nor vice versa. For instance, the request vector $(\lambda_1, \lambda_2, \lambda_3, \lambda_4) = (1, 2, 1, 1)$ is achievable by the systematic MDS code but lies outside the service region of $\mathrm{RM}(1, 3)$, while the vector $(0, 4, 0, 0)$ is achievable by the latter but not the former.
\end{itemize}
}
\label{ex:MDS_RM_compare}
    \begin{figure}
    \centering
    \begin{subfigure}[b]{0.33\linewidth} 
        \centering
        \resizebox{\linewidth}{!}{
\begin{tikzpicture}
    [scale=1.25,
    axis/.style={->, color=black, thick},
    facet/.style={fill=gray!, opacity=0.200000},
    edge/.style={color=black, thick},
    smallfacet/.style={fill=red, opacity=0.2},
    smalledge/.style={color=red!50!black, thick},
    vertex/.style={inner sep=1.5pt,circle,fill=black}]

    \coordinate (O) at (0,0); 

    \coordinate (A) at (90:4.5);   
    \coordinate (B) at (210:4.5);  
    \coordinate (C) at (330:4.5);  

    \coordinate (SA) at (90:3.33);
    \coordinate (SB) at (210:4);
    \coordinate (SC) at (330:4);

    \draw[axis] (O) -- (90:5) node[above] {\huge $\lambda_1$};
    \draw[axis] (O) -- (210:5) node[left] {\huge $\lambda_3$};
    \draw[axis] (O) -- (330:5) node[right] {\huge $\lambda_2$};
    
    \node[vertex, label={below:\huge $0$}] at (O) {};

    \fill[smallfacet] (SA) -- (SB) -- (SC) -- cycle;
    \draw[smalledge] (SA) -- (SB) -- (SC) -- cycle;
    
    \fill[facet] (A) -- (B) -- (C) -- cycle;
    \draw[edge] (A) -- (B) -- (C) -- cycle;
    \draw[-,thick] (1.0,2.3) -- (2.8,3.0) node[right, fill=gray!20, inner sep=5pt] {\Huge $\Omega$};
    \draw[-,thick] (0.6,0.7) -- (2.8,1.7) node[right, fill=red!25, inner sep=5pt] {\Huge $\mathcal{A}$};
    
    \node[above, text=red!60!black] at (SA) {\huge $10/3$};
    \node[above left, text=red!60!black] at (SB) {\Huge $4$};
    \node[above right, text=red!60!black] at (SC) {\Huge $4$};

    \node[right, text=black] at (A) {\huge $4.5$};
    \node[below right, text=black] at (B) {\huge $4.5$};
    \node[below left, text=black] at (C) {\huge $4.5$};

\end{tikzpicture}
}
        \caption{$\mathcal{A},\, \Omega$, and $\mathcal{S}(\mathbf{G})$ of $\mathrm{RM}(1,\, 3)$.}
        \label{fig:left}
    \end{subfigure}
    \hfill 
    \begin{subfigure}[b]{0.28\linewidth}
        \centering
        \resizebox{\linewidth}{!}{    
\begin{tikzpicture}
    [scale=1.2,
    axis/.style={->, color=black, thick},
    facet/.style={fill=gray!, opacity=0.200000},
    edge/.style={color=black, thick},
    smallfacet/.style={fill=red, opacity=0.2},
    smalledge/.style={color=red!50!black, thick},
    vertex/.style={inner sep=1.5pt,circle,fill=black}]

    \coordinate (O) at (0,0); 

    \coordinate (A) at (90:2);   
    \coordinate (B) at (210:2);  
    \coordinate (C) at (330:2);  


    \draw[axis] (O) -- (90:2.5) node[above] {\Large $\lambda_1$};
    \draw[axis] (O) -- (210:2.5) node[left] {\Large $\lambda_3$};
    \draw[axis] (O) -- (330:2.5) node[right] {\Large $\lambda_2$};
    
    \node[vertex, label={below:\Large $0$}] at (O) {};

    
    \fill[facet] (A) -- (B) -- (C) -- cycle;
    \draw[edge] (A) -- (B) -- (C) -- cycle;
    \draw[-,thick] (0.7,0.3) -- (2.2,1.0) node[right, fill=gray!20, inner sep=5pt] {\Large $\Omega \equiv \mathcal{A} \equiv \mathcal{S}(\mathbf{G})$};
    

    \node[right, text=black] at (A) {\Large $2$};
    \node[below right, text=black] at (B) {\Large $2$};
    \node[below, text=black] at (C) {\Large $2$};

\end{tikzpicture}
 }
        \caption{$\mathcal{A},\, \Omega$, and $\mathcal{S}(\mathbf{G})$ of non-systematic MDS $(8, 4)$.}
        \label{fig:mid}
    \end{subfigure}
    \hfill 
   \begin{subfigure}[b]{0.37\linewidth}
        \centering
        \resizebox{\linewidth}{!}{    
\begin{tikzpicture}
    [scale=1.3,
    axis/.style={->, color=black, thick},
    facet/.style={fill=gray!, opacity=0.200000},
    edge/.style={color=black, thick},
    smallfacet/.style={fill=red, opacity=0.2},
    smalledge/.style={color=red!50!black, thick},
    vertex/.style={inner sep=1.5pt,circle,fill=black}]

    \coordinate (O) at (0,0); 

    \coordinate (A) at (90:5);   
    \coordinate (B) at (210:5);  
    \coordinate (C) at (330:5);  

    \coordinate (SA) at (90:2.75);
    \coordinate (SB) at (210:2.75);
    \coordinate (SC) at (330:2.75);

    \draw[axis] (O) -- (90:5.5) node[above] {\Large $\lambda_1$};
    \draw[axis] (O) -- (210:5.5) node[left] {\Large $\lambda_3$};
    \draw[axis] (O) -- (330:5.5) node[right] {\Large $\lambda_2$};
    
    \node[vertex, label={below:\Large $0$}] at (O) {};

    \fill[smallfacet] (SA) -- (SB) -- (SC) -- cycle;
    \draw[smalledge] (SA) -- (SB) -- (SC) -- cycle;
    
    \fill[facet] (A) -- (B) -- (C) -- cycle;
    \draw[edge] (A) -- (B) -- (C) -- cycle;
    \draw[-,thick] (1.3,2.3) -- (2.8,3.0) node[right, fill=gray!20, inner sep=6pt] {\Huge $\Omega$};
    \draw[-,thick] (0.6,0.7) -- (2.8,1.7) node[right, fill=red!25, inner sep=6pt] {\Huge $\mathcal{A}$};
    
    \node[right, text=red!60!black] at (SA) {\Large $11/4$};
    \node[below right, text=red!60!black] at (SB) {\Large $11/4$};
    \node[below, text=red!60!black] at (SC) {\Large $11/4$};

    \node[right, text=black] at (A) {\huge $5$};
    \node[below right, text=black] at (B) {\huge $5$};
    \node[below, text=black] at (C) {\huge $5$};

\end{tikzpicture}
 }
        \caption{$\mathcal{A},\, \Omega$ of systematic MDS $(8, 4)$.}
        \label{fig:right}
    \end{subfigure}    
\caption{(a) The service region of the first-order code $\mathrm{RM}(1,\,3)$ is bounded by its maximal achievable simplex $\mathcal{A}$ (in red) and its minimal enclosing simplex $\Omega$ (in gray). (b) The service region of a non-systematic $(8, 4)$ MDS code is defined by $\sum_{j=1}^4\lambda_j \le 2$, which coincides with its two bounding simplices. (c) The service region of a systematic $(8, 4)$ MDS code is strictly sandwiched between $\mathcal{A}$ \big($\sum_{j=1}^4\lambda_j \le \frac{11}{4}$\big) and $\Omega$ \big($\sum_{j=1}^4\lambda_j \le 5$\big)~\cite{SRR:lySV2025}.}
    \label{fig:SRR_RM12}
\end{figure}
\end{exmp}


\section{Conclusions and Future Work}\label{sec:Conclusion}
\textcolor{black}{This paper presents a detailed analysis of the Service Rate Region SRR of Reed--Muller codes, highlighting their utility in distributed storage systems. By leveraging finite geometry, we characterized the structure and enumerated the recovery sets, which in turn served as the basis for deriving tight bounds on the maximal achievable request rates. A key contribution of this work is the approximation of the SRR via bounding simplices, which we proved are within a factor of 2 of each other. These insights provide practical guidelines for optimizing data accessibility and resource allocation in scalable systems. However, determine which coding scheme will win in practice requires consideration and evaluation at the system level, as explained in \cite{edge:KolosovASY23}.}

\textcolor{black}{Characterizing the recovery sets was instrumental in deriving the bounding simplices of the RM SRR polytopes. Furthermore, this characterization was later used to understand the relationship between the SRR problem and combinatorial design theory \cite{SRR_Design:lySV2025}, and to derive new MLD algorithms for RM codes \cite{MLD:LyS25}.
Future research directions include extending this analysis to $q$-ary Reed--Muller codes ($q > 2$), as they may offer higher code rates (i.e., higher storage density) than their binary counterparts.}

\textcolor{black}{We emphasize again that SRR is one of the several performance metrics of distributed storage systems. It quantifies the efficiency and scalability of the data access. These attributes are particularly important in systems with highly variable data access, such as Edge \cite{edge:KolosovYMS20}. The mean download time is a related performance metric \cite{Download:KadheSS15, Download:AktasKSS21}. 
However, in coded distributed storage systems, evaluating download time is often intractable because codes induce complex queues. Even in simple cases, one has to resort to heuristics \cite{Download:AktasS18}. SRR  represents the stability region of these queues. Distributed systems often require optimizing multiple metrics, such as storage efficiency, reliability, and ease of repair. Thus, characterizing trade-offs between them is an important direction for future research.}

\section*{Acknowledgment}
Part of this work was done during the visit of the last two authors’ to IISc Bengaluru. The authors thank 
the hosts for their hospitality. This work was supported in part by NSF CCF-2122400. The work of V.~Lalitha was supported in part by the grant DST/INT/RUS/RSF/P-41/2021
from the Department of Science \& Technology, Govt.\ of India.\ The authors thank Michael Schleppy for valuable discussions.

\appendices
\section{Proof of Theorem~\ref{thm:RecoverySet}}
\label{app:proof_theorem4}
\begin{proof}
Consider the code's generator matrix \( \mathbf{G} \). Each row corresponding to \(\boldsymbol{v}_{1}, \boldsymbol{v}_{2}, \dots, \boldsymbol{v}_{m} \) in \( \mathbf{G} \) represents the incidence vector of an \((m-1)\)-flat in the Euclidean geometry \( \mathrm{EG}(m, 2) \) (see Example~\ref{ex:flat_example}). Specifically, for each \( i \in \{1, \dots, m\} \), we have:
\begin{align*}
\boldsymbol{v}_{i}(2^m) = 1 \quad \text{and} \quad \boldsymbol{v}_{i}(1) = 0.
\end{align*}
This implies that all such flats pass through the point \( P_{2^m} \) (whose coordinates are all ones) and exclude the origin \( P_1 \) (whose coordinates are all zeros). \textcolor{black}{Crucially, since the point \( P_{2^m} \) lies on all these flats, they are \emph{pairwise non-parallel}.} By Lemma~\ref{lem:FlatIntersection}, the intersection of any two flats is itself a flat. Thus the intersection of any \( \ell \) distinct, pairwise non-parallel \((m-1)\)-flats from this set yields an \((m-\ell)\)-flat \( L \) in \( \mathrm{EG}(m, 2) \). The incidence vector of \( L \), denoted as \(\boldsymbol{v}_{\sigma_1}\boldsymbol{v}_{\sigma_2}\dots\boldsymbol{v}_{\sigma_{\ell}}\), is constructed as the element-wise product of \(\boldsymbol{v}_{\sigma_1}, \boldsymbol{v}_{\sigma_2}, \dots, \boldsymbol{v}_{\sigma_{\ell}} \). Since all flats $\boldsymbol{v}_i, \, i \in [m]$ pass through point $P_{2^m}$, the flat $L$ also passes through this point. Define the \textit{complementary} flat \( T \) to \( L \) with the incidence vector:
\begin{align}
    \label{eq:flat_T}
\boldsymbol{v}_T = \boldsymbol{v}_{\tau_1}\boldsymbol{v}_{\tau_2}\dots\boldsymbol{v}_{\tau_{m-{\ell}}},
\end{align}
where \( \{\tau_1, \dots, \tau_{m-{\ell}}\} = [m] \setminus \{\sigma_1, \dots, \sigma_{\ell}\} = \{\sigma_{{\ell}+1}, \sigma_{{\ell}+2}, \dots, \sigma_{m}\}\). The flat \( T \) has dimension $\ell$ and also passes through the point \( P_{2^m} \). \textcolor{black}{There are \( 2^{m-\ell} \) pairwise disjoint \( \ell \)-flats in \( \mathrm{EG}(m, 2) \), comprising the flat \( T \) and its \( 2^{m-\ell}-1 \) affine translations.} Let:
\[
T_1 = \{\boldsymbol{y} + \boldsymbol{1}_m, \, \text{for all point }\boldsymbol{y} \,\in\, T\}.
\]
Here, \( T_1 \) is a translation of \( T \) that contains the origin \( P_1 \) but not \( P_{2^m} \). Consequently, \( T_1 \) is an \({\ell}\)-dimensional linear subspace $\mathscr{S}$ of \( \mathrm{EG}(m, 2) \). Express the codeword \( \boldsymbol{x} \) as:
\[
\boldsymbol{x} = \boldsymbol{a} \cdot \mathbf{G} = \sum_{\rho = \rho_1\rho_2\dots\rho_h} a_{\rho} \boldsymbol{v}_{\rho_1}\boldsymbol{v}_{\rho_2}\dots\boldsymbol{v}_{\rho_h},
\]
where the sum is over all subsets \( \{\rho_1, \dots, \rho_h\} \) of \( \{1, \dots, m\} \) with \( h \leq r \) (This generalizes~\eqref{eq:message_to_codeword}). The Hadamard product $\boldsymbol{v}_{\rho_1}\boldsymbol{v}_{\rho_2}\dots\boldsymbol{v}_{\rho_h}$ is the incidence vector of some flat $W$ in $\mathrm{EG}(m, 2)$. Now, summing the coordinates of \( \boldsymbol{x} \) over all points \( P_i \in T_1 \) gives:
\begin{align}
\sum\limits_{i\,:\,P_i \,\in\, T_1} x_i &= \sum_{\rho} a_{\rho} \sum_{i\,:\,P_i \,\in\, T_1} \left( \boldsymbol{v}_{\rho_1}\boldsymbol{v}_{\rho_2}\dots\boldsymbol{v}_{\rho_k} \right)_i = \sum_{\rho} a_{\rho} N(T_1, \rho), \label{eq:decoding}
\end{align}
where \( N(T_1, \rho) \) denotes the number of points in the intersection of \( T_1 \) with the flat \( W \). Since \( T_1 \) is defined as a translation of \( T \) by the all-ones vector \( \boldsymbol{1}_m \), its incidence vector \( \boldsymbol{v}_{T_1} \) is obtained by taking the bitwise complement of the incidence vector of \( T \). Consequently, one has from~\eqref{eq:flat_T}:
\[
\boldsymbol{v}_{T_1} = \overline{\boldsymbol{v}}_{T} = \overline{\boldsymbol{v}}_{\tau_1}\overline{\boldsymbol{v}}_{\tau_2}\dots\overline{\boldsymbol{v}}_{\tau_{m-{\ell}}},
\]
where \( \overline{\boldsymbol{v}}_{\tau_i} \) represents the bitwise complement of \( \boldsymbol{v}_{\tau_i} \). Incidence vector of \( T_1 \,\cap\, W \) by Lemma~\ref{lem:FlatIntersection} is:
\begin{align}\label{eq:intersection_incidence}
\boldsymbol{v}_{T_1\,\,\cap\,\,W} = \boldsymbol{v}_{\rho_1}\boldsymbol{v}_{\rho_2}\dots\boldsymbol{v}_{\rho_{h}}\overline{\boldsymbol{v}}_{\tau_1}\overline{\boldsymbol{v}}_{\tau_2}\dots\overline{\boldsymbol{v}}_{\tau_{m-{\ell}}}.
\end{align}

\textbf{Key Observations:}
\begin{enumerate}
    \item Parity of Intersections: All flats of dimension at least one contain an even (power of two) number of points. \textcolor{black}{(True in general for Euclidean geometry.)}
    \item Intersection Dimension: By Lemma~\ref{lem:FlatIntersection}, the intersection \( T_1 \,\cap\, W \) is a flat whose dimension depends on \( h \) relative to \( {\ell} \).
\end{enumerate}

\textbf{Case Analysis:}
\begin{itemize}
    \item When \( h < \ell \): By~\eqref{eq:intersection_incidence}, the incidence vector \( \boldsymbol{v}_{T_1 \,\cap\, W} \) is given by the element-wise product of \( h + m - \ell \) flats, each of dimension \( m - 1 \). Since \( h < \ell \), it follows that \( h + m - \ell \le m-1 \), and hence the intersection \( T_1 \,\cap\, W \) has dimension at least one. Therefore, \( N(T_1, \rho) \) must be even.

    \item When $h = \ell$ and $W = L$: The intersection \( T_1 \,\cap\, W \) consists solely of one point, so \( N(T_1, \rho) = 1 \). (See Lemma~\ref{lem:UniqueIntersection}.)
    
    \item When \( h={\ell} \) and $W \neq L$, then \( N(T_1, \rho) = 0 \) because the intersection conditions become impossible. (See Lemma~\ref{lem:Zero_intersection_2}.)

    \item When \( {\ell} < h \le r \), then \( N(T_1, \rho) = 0 \) because the intersection conditions become impossible. (See Lemma~\ref{lem:ZeroIntersection}.)
\end{itemize}
    
\textbf{Supporting Lemmas:}

\begin{lemma}\label{lem:UniqueIntersection}
If $h=\ell$ and \( W = L \), then \( N(T_1, \rho) = 1 \).
\end{lemma}
\begin{proof}
When \( W = L \), the incidence vector of \( W \) is \( \boldsymbol{v}_{\sigma_1}\boldsymbol{v}_{\sigma_2}\dots\boldsymbol{v}_{\sigma_{\ell}} \). 
Incidence vector of \( T_1 \,\cap\, W \) by Lemma~\ref{lem:FlatIntersection} is:
\[
\boldsymbol{v}_{\sigma_1}\boldsymbol{v}_{\sigma_2}\dots\boldsymbol{v}_{\sigma_{\ell}}\cdot\overline{\boldsymbol{v}}_{\tau_1}\overline{\boldsymbol{v}}_{\tau_2}\dots\overline{\boldsymbol{v}}_{\tau_{m-{\ell}}}.
\]
A point \( P_j \) lies in this intersection if and only if:
\[
\begin{cases}
    \boldsymbol{v}_{\sigma_1}(j) = \boldsymbol{v}_{\sigma_2}(j) = \dots = \boldsymbol{v}_{\sigma_{\ell}}(j) = 1, \\
    \boldsymbol{v}_{\tau_1}(j) = \boldsymbol{v}_{\tau_2}(j) = \dots = \boldsymbol{v}_{\tau_{m-{\ell}}}(j) = 0.
\end{cases}
\]
Since \( \{\tau_1, \dots, \tau_{m-{\ell}}\} = [m] \setminus \{\sigma_1, \dots, \sigma_{\ell}\} \) and the vectors \( \boldsymbol{v}_1, \dots, \boldsymbol{v}_m \) form the set of all length-$m$ binary vectors in \( \mathbb{F}_2^m \), there exists exactly one such point \( P_j \).
\end{proof}

\begin{lemma}\label{lem:Zero_intersection_2}
If \( h={\ell} \) and $W \neq L$, then \( N(T_1, \rho) = 0 \).
\end{lemma}
\begin{proof}
When \( h = \ell \), the incidence vector of the intersection \( T_1 \,\cap\, W \) is, by~\eqref{eq:intersection_incidence},
\[
\boldsymbol{v}_{\rho_1} \boldsymbol{v}_{\rho_2} \dots \boldsymbol{v}_{\rho_{\ell}} \cdot \overline{\boldsymbol{v}}_{\tau_1} \overline{\boldsymbol{v}}_{\tau_2} \dots \overline{\boldsymbol{v}}_{\tau_{m - \ell}}.
\]
Since \( \{\tau_1, \dots, \tau_{m - \ell}\} = [m] \setminus \{\sigma_1, \dots, \sigma_{\ell}\} \), and \( W \neq L \), the defining flats of \( W \) must differ from those of \( L \), whose incidence vector is \( \boldsymbol{v}_{\sigma_1} \dots \boldsymbol{v}_{\sigma_{\ell}} \). Equivalently, the set $\{\rho_1, \dots, \rho_{\ell}\}$ must differ from the set $\{\sigma_1, \dots, \sigma_{\ell}\} = [m] \setminus \{\tau_1, \dots, \tau_{m - \ell}\}$. Therefore, at least one \( \rho_i \) must coincide with some \( \tau_j \). Without loss of generality, assume \( \rho_1 = \tau_1 \).

Then, for any point \( P_j \in T_1 \,\cap\, W \), we must have:
\[
\begin{cases}
\boldsymbol{v}_{\rho_1}(j) = 1, \\
\overline{\boldsymbol{v}}_{\tau_1}(j) = 1.
\end{cases}
\]
But this leads to a contradiction: \( \boldsymbol{v}_{\rho_1}(j) = 1 \) implies \( \boldsymbol{v}_{\tau_1}(j) = 1 \), hence \( \overline{\boldsymbol{v}}_{\tau_1}(j) = 0 \), contradicting the second condition. Thus, no such point \( P_j \) exists, which implies \( T_1 \,\cap\, W = \emptyset \) and hence \( N(T_1, \rho) = 0 \).
\end{proof}

\begin{lemma}\label{lem:ZeroIntersection}
If \( {\ell} < h \le r \), then \( N(T_1, \rho) = 0 \).
\end{lemma}
\begin{proof}
The incidence vector of \( T_1 \,\cap\, W \) is, by~\eqref{eq:intersection_incidence},
\[
\boldsymbol{v}_{T_1\, \,\cap\,\, W} = \boldsymbol{v}_{\rho_1} \boldsymbol{v}_{\rho_2} \dots \boldsymbol{v}_{\rho_h} \cdot \overline{\boldsymbol{v}}_{\tau_1} \overline{\boldsymbol{v}}_{\tau_2} \dots \overline{\boldsymbol{v}}_{\tau_{m - \ell}}.
\]
Since \( h > \ell \), the total number of (possibly dependent) flat vectors in the product is \( h + (m - \ell) > m \). This implies that at least one \( \rho_i \) must coincide with some \( \tau_j \), leading to a contradiction similar to the one seen in the proof of the previous lemma. Hence, no such intersection can exist.
\end{proof}

From~\eqref{eq:decoding}, all terms \( a_{\rho} N(T_1, \rho) \) vanish except when $W = L$ and $\rho = \sigma_{1}\sigma_2\dots\sigma_{\ell}$ (note that we are summing over $\mathbb{F}_2$ so $N(T_1, \rho)$ vanishes whenever it is even). In this case, \( N(T_1, \sigma_{1}\sigma_2\dots\sigma_{\ell}) = 1 \), yielding:
\[
\sum\limits_{i\,:\,P_i \,\in\, T_1} x_i = a_{\sigma^{\ell}}.
\]
Additionally, since \( T_1 \) contains the origin, it is an \( {\ell} \)-dimensional linear subspace $\mathscr{S}$ ($T_1 \equiv \mathscr{S}$) and so \(|\mathscr{S}| = |T_1| = 2^{\ell}, P_1 \in \mathscr{S} \). Thus, if we denote as $S$ the set of indices corresponding to the points in $\mathscr{S}$, constraints in (\ref{min_recovery_set}) are satisfied, concluding our proof.
\end{proof}

\section{Proof of Theorem~\ref{thm:RecoverySetSize}}
\label{app:proof_theorem5}
\begin{proof}
    Let \( S' \) be any recovery set for \( a_{\sigma^{\ell}} \) different from \( S \). By the definition of recovery sets, we have:
    \[
    \sum\limits_{j \,\in\, S} x_j = a_{\sigma^{\ell}} \quad \text{and} \quad \sum\limits_{j \,\in\, S'} x_j = a_{\sigma^{\ell}}.
    \]
    Adding these two equations, we obtain:
    \begin{equation}\label{eq:sum_zero}
        \sum\limits_{j \,\in\, S} x_j + \sum\limits_{j \,\in\, S'} x_j = a_{\sigma^{\ell}} + a_{\sigma^{\ell}} = 0.
    \end{equation}
    Define the set \( S_1 \) as the symmetric difference of \( S \) and \( S' \):
    \[
    S_1 = (S \,\cup\, S') \setminus (S \,\cap\, S').
    \]
    Equation~\eqref{eq:sum_zero} implies:
\[
\sum_{j \in S_1} x_j = 0 - 2\sum_{j \in S \,\cap\, S_1} x_j = 0\ \text{(over $\mathbb{F}_2$)}, \quad \text{for all codewords } \boldsymbol{x} \in \mathcal{C}\ .
\]
Since each codeword component satisfies \( x_j = \boldsymbol{a} \cdot \boldsymbol{c}_j \), where $\boldsymbol{c}_j$ denotes the $j$-th column of $\mathbf{G}$ ($1 \le j \le n$). We obtain
\[
\boldsymbol{a} \cdot \left( \sum_{j \in S_1} \boldsymbol{c}_j \right) = 0 \quad \text{for all message vectors } \boldsymbol{a}.
\]
This holds if and only if \( \sum_{j \in S_1} \boldsymbol{c}_j = \boldsymbol{0}_k \). Equivalently, in matrix form:
\[
\mathbf{G} \cdot \boldsymbol{\chi}(S_1)^\top = \boldsymbol{0}_k,
\]
where \( \boldsymbol{\chi}(S_1) \) is the incidence vector of \( S_1 \), and \( \boldsymbol{0}_k \) is the all-zero vector of length \( k \). Therefore, \( \boldsymbol{\chi}(S_1) \) belongs to the dual code \( \mathcal{C}^{\perp} \) of \( \mathcal{C} = \mathrm{RM}(r, m) \). The dual of the Reed--Muller code \( \mathrm{RM}(r, m) \) is \( \mathrm{RM}(m - r - 1, m) \) when $r\le m-1$, which has a minimum distance of \( 2^{r+1} \)~\cite{Coding:books/MacWilliamsS77}, and is the single codeword $\boldsymbol{0}_n$ when $r = m$. This implies that any non-zero codeword in \( \mathcal{C}^{\perp} \) must have a weight (number of non-zero coordinates) of at least \( 2^{r+1} \). Hence:
    \[
    \text{wt}(\boldsymbol{\chi}(S_1)) \geq 2^{r+1}.
    \]
    Consequently, the size of \( S_1 \) satisfies:
    \[
    |S_1| \geq 2^{r+1}.
    \]
    Since \( S_1 = S \,\cup\, S' \setminus S \,\cap\, S' \), we have:
    \begin{align}\label{eq:AlternativeSize}
        |S'| = |S_1| - |S| + 2|S \,\cap\, S'| \geq |S_1| - |S| \geq 2^{r+1} - |S|.
    \end{align}
    Given that \( |S| = 2^{\ell} \) from Theorem~\ref{thm:RecoverySet}, it follows:
    \[
    |S'| \geq 2^{r+1} - 2^{\ell} > 2^r,
    \]
    whenever \( {\ell} < r \).
    It also follows from~\eqref{eq:AlternativeSize} that
\begin{align}\label{eq:OutsideElements}
    |S' \setminus S| = |S'| - |S \,\cap\, S'| = |S_1| - |S| + |S \,\cap\, S'| \ge |S_1| - |S| \ge 2^{r+1} - 2^{\ell},
\end{align}
i.e., the number of elements in \( S' \) outside of \( S \) is at least \( 2^{r+1} - |S| = 2^{r+1} - 2^{\ell} \). In particular, this implies that any recovery set \( S' \) of size exactly \( 2^{r+1} - 2^{\ell} \) must be disjoint from \( S \).

    \textbf{Special Case When \( {\ell} = r \):}
    
    If \( {\ell} = r \), substituting into the inequality gives:
    \[
    |S'| \geq 2^{r+1} - 2^r = 2^r.
    \]
    Therefore, when \( {\ell} = r \), the recovery set \( S \) has a size of \( 2^r \), which is the minimum possible size for any recovery set of \( a_{\sigma^r} \). Moreover, other recovery sets also attain this minimum size, as established in Theorem~\ref{thm:ReedDecoding}.
\end{proof}

\section{Proof of Theorem~\ref{thm:RecoverySetCount}}
\label{app:proof_theorem6}
\begin{proof}
    We utilize the same notations established in the proofs of Theorems~\ref{thm:RecoverySet} and~\ref{thm:RecoverySetSize}. Let \( \mathbf{e}_i \) denote the standard basis vector such that \( a_{\sigma^{\ell}} = \boldsymbol{a} \cdot \mathbf{e}_i \). This implies:
    \[
    \mathbf{G} \cdot \boldsymbol{\chi}(S)^{\top} = \mathbf{e}_i.
    \]
    
    From Theorem~\ref{thm:RecoverySet}, the point set \( \{P_j \mid j \in S\} \) forms an \( {\ell} \)-dimensional linear subspace \( \mathscr{S} \) in the Euclidean geometry \( \mathrm{EG}(m, 2) \). Now, consider a recovery set \( S' \) for \( a_{\sigma^{\ell}} \) with size \( |S'| = 2^{r+1} - 2^{\ell} \). Our goal is to establish a one-to-one correspondence between each such set \( S' \) and a \((r+1)\)-dimensional linear subspace \( F \) that contains \( \mathscr{S} \).
    
    \textbf{Step 1: Establishing the Correspondence}
    
    From the proof of Theorem~\ref{thm:RecoverySetSize}, we see that \( S' \) must be disjoint from \( S \). Define as $S_1 = S \,\cup\, S'$ the union of \( S \) and \( S' \) whose size is
    \(
    |S_1| = |S| + |S'| = 2^{\ell} + (2^{r+1} - 2^{\ell}) = 2^{r+1}.
    \)
    
Since $\mathbf{G} \cdot \boldsymbol{\chi}(S_1)^{\top} = \mathbf{G} \cdot \boldsymbol{\chi}(S)^{\top} + \mathbf{G} \cdot \boldsymbol{\chi}(S')^{\top} = \mathbf{e}_i + \mathbf{e}_i = \mathbf{0}_k$, the incidence vector \( \boldsymbol{\chi}(S_1) \) is a codeword in the dual code \( \mathcal{C}^{\perp} \) of \( \mathrm{RM}(r, m) \), which is \( \mathrm{RM}(m - r - 1, m) \). Since the weight of \( \boldsymbol{\chi}(S_1) \) is
    \(
    \text{wt}(\boldsymbol{\chi}(S_1)) = 2^{r+1},
    \)
    it follows that \( \boldsymbol{\chi}(S_1) \) is a minimum-weight codeword in \( \mathcal{C}^{\perp} \). By Theorem~\ref{thm:MinWeightRM}, \( \boldsymbol{\chi}(S_1) \) corresponds to an \((r+1)\)-flat \( F \) in \( \mathrm{EG}(m, 2) \).
    
    \textbf{Step 2: \( F \) is an \((r+1)\)-dimensional Linear Subspace}
    
    Since \( 1 \,\in\, S \subset S_1 \), it follows that \( P_1 \,\in\, \mathscr{S} \subset F \) where \( P_1 \) is the origin. Therefore, \( F \) is an \((r+1)\)-dimensional linear subspace that contains the \( {\ell} \)-dimensional linear subspace \( \mathscr{S} \).
    
    \textbf{Step 3: Establishing the Bijection}
    
    Conversely, assume that \( F \) is an \((r+1)\)-dimensional linear subspace of \( \mathrm{EG}(m, 2) \) containing \( \mathscr{S} \), and let \( S_1 \) denotes the point set \( \{P_j \mid j \in F\} \), with \( \boldsymbol{\chi}(S_1) \) its incidence vector. By Theorem~\ref{thm:MinWeightRM}, \( \boldsymbol{\chi}(S_1) \) is a minimum-weight codeword in the dual code \( \mathcal{C}^\perp \) with weight $\text{wt}(\boldsymbol{\chi}(S_1)) = 2^{r+1}$, and thus satisfies:
\[
\mathbf{G} \cdot \boldsymbol{\chi}(S_1)^{\top} = \boldsymbol{0}_k.
\]
Define \( C = F \setminus \mathscr{S} \), and let \( S' \) be the point set \( \{P_j \mid j \in C\} \), with incidence vector \( \boldsymbol{\chi}(S') \). Then the weight of \( \boldsymbol{\chi}(S') \) is:
\[
\text{wt}(\boldsymbol{\chi}(S')) = \text{wt}(\boldsymbol{\chi}(S_1)) - \text{wt}(\boldsymbol{\chi}(\mathscr{S})) = 2^{r+1} - 2^{\ell}.
\]
Moreover, since \( \boldsymbol{\chi}(S_1) = \boldsymbol{\chi}(S') + \boldsymbol{\chi}(\mathscr{S}) \), it follows that:
\[
\mathbf{G} \cdot \boldsymbol{\chi}(S')^{\top} = \mathbf{G} \cdot \boldsymbol{\chi}(S_1)^{\top} - \mathbf{G} \cdot \boldsymbol{\chi}(\mathscr{S})^{\top} = \boldsymbol{0}_k - \mathbf{e}_i = \mathbf{e}_i.
\]
This shows that \( C \) is a valid recovery set for \( a_{\sigma^{\ell}} \) of size \( 2^{r+1} - 2^{\ell} \). Therefore, there is a bijective correspondence between recovery sets \( S' \) of size \( 2^{r+1} - 2^{\ell} \) and \((r+1)\)-dimensional linear subspaces \( F \subseteq \mathrm{EG}(m, 2) \) containing \( \mathscr{S} \).

    \textbf{Step 4: Counting the Recovery Sets}
    
    The number of such \((r+1)\)-dimensional linear subspaces \( F \) that \textcolor{black}{contain the \( {\ell} \)-dimensional linear subspace \( \mathscr{S} \)} is given by the Gaussian binomial coefficient:
    \[
    \gaussbinom{m - {\ell}}{r + 1 - {\ell}}.
    \]
    This coefficient counts the number of ways to choose an \((r+1 - {\ell})\)-dimensional extension of \( \mathscr{S} \) within the remaining \( m - {\ell} \) dimensions~\cite{NetworkCode:book/MarcusONA}.
    
    \textbf{Step 5: Inclusion of Coordinates Outside \( S \)}
    
To demonstrate that each coordinate \( x_j \) with \( j \notin S \) is included in exactly \( \gaussbinom{m - {\ell} - 1}{r - {\ell}} \) of these recovery sets, observe that its corresponding point \( P_j \) is not an element of \( \mathscr{S} \). This implies that adding \( P_j \) to \( \mathscr{S} \) increases the dimension of the linear subspace by one. Indeed:
\begin{align*}
\text{dim}(\mathscr{S} \,\cup\, \{P_j\}) = \text{dim}(\mathscr{S}) + \text{dim}(\text{span}(\{P_j\})) - \text{dim}(\mathscr{S} \,\cap\, \text{span}(\{P_j\})).
\end{align*}
Since \( P_j \notin \mathscr{S} \), the intersection \( \mathscr{S} \,\cap\, \text{span}(\{P_j\}) \) is trivial (i.e., has dimension 0), and \( \text{span}(\{P_j\}) \) is a 1-dimensional linear subspace ($P_j \notin \mathscr{S}$ therefore $P_j$ must be different from the origin $P_1$). Therefore:
\[
\text{dim}(\mathscr{S} \,\cup\, \{P_j\}) = \text{dim}(\mathscr{S}) + 1 ={\ell} + 1.
\]
    
This means that any \((r+1)\)-dimensional linear subspace \( F \) containing both \( \mathscr{S} \) and \( P_j \) must extend \( \mathscr{S} \) by one additional dimension. The number of such \((r+1)\)-dimensional linear subspaces \( F \) is determined by selecting an \((r - {\ell})\)-dimensional subspace from the remaining \( m - {\ell} - 1 \) dimensions (excluding the dimension added by \( P_j \)). The number of ways to do this is given by the Gaussian binomial coefficient:
\[
\gaussbinom{m - {\ell} - 1}{r - {\ell}}.
\]
    
Therefore, each coordinate \( x_j \) not in \( S \) is included in exactly \( \gaussbinom{m - {\ell} - 1}{r - {\ell}} \) recovery sets of size \( 2^{r+1} - 2^{\ell} \).
    
    Combining these observations, we conclude that:
    Number of recovery sets of size $2^{r+1} - 2^{\ell}$ for $a_{\sigma^{\ell}}$ is $\gaussbinom{m - {\ell}}{r + 1 - {\ell}}$, and each coordinate \( x_j, \, j \notin S \) is present in exactly \( \gaussbinom{m - {\ell} - 1}{r - {\ell}} \) such recovery sets.
\end{proof}

\section{Proof of Theorem~\ref{thm:sum_bound_same_order}}
\label{app:proof_theorem8}
\begin{proof}
    We will establish Theorem~\ref{thm:sum_bound_same_order} by first proving an upper bound (converse) on \( \sum\limits_{j = p(\ell)}^{q(\ell)} \lambda_j \) and then demonstrating that this bound is achievable.

    \begin{enumerate}
        \item \textbf{Upper Bound (Converse)} \\
        We again use a capacity argument. Consider a system with \( n = 2^m \) nodes. Theorems~\ref{thm:RecoverySet} and~\ref{thm:RecoverySetSize} show that for each \( j \) such that \( p(\ell) \leq j \leq q(\ell) \), the standard basis vector \( \mathbf{e}_j \) has:
        \begin{itemize}
            \item One unique recovery set \( S \) of size \( 2^{\ell} \). \textcolor{black}{Note that in the special case \( \ell = r \), all recovery sets have the same size \( 2^r \). The set \( R_{j,1} \) is unique in that its associated point set contains the origin \( P_{1} \) and forms an \( r \)-dimensional linear subspace \( \mathscr{S} \). The remaining recovery sets correspond to \( r \)-flats in \( \mathrm{EG}(m,2) \), each obtained as an affine translation of \( \mathscr{S} \) in $\mathrm{EG}(m, 2)$. (See the example in~\eqref{eq:example_RM24}.)}
            \item Additional recovery sets, each of size at least \( 2^{r+1} - 2^{\ell} \).
        \end{itemize}
        \textcolor{black}{Note that in the case when $\ell = r$, there are multiple recovery sets of the same size $2^r$.} From Remark~\ref{restate:recovery_set}, all recovery sets of size \( 2^{\ell} \) for different \( j \) share the first node (corresponding to column \( \boldsymbol{c}_1 \)). Let \( \lambda_{j,1} \) denote the demand allocated to the recovery set \( S \) of size \( 2^{\ell} \) for \( \mathbf{e}_j \).
        
        Since all these recovery sets share the first node, the total demand allocated to this node cannot exceed 1 (i.e., 100\% utilization). Therefore, we have:
        \begin{align}\label{eq:min_allocate}
        \sum\limits_{j = p({\ell})}^{q({\ell})} \lambda_{j,1} \leq 1.
        \end{align}
        
        Let \( B \) represent the total remaining demand for all \( \mathbf{e}_j \):
        \[
        B = \sum\limits_{j = p({\ell})}^{q({\ell})} (\lambda_j - \lambda_{j,1}).
        \]
        
        These remaining demands must be served by recovery sets of size at least \( 2^{r+1} - 2^{\ell} \ge 2^{\ell}\). To minimize the total capacity used, we need all such recovery sets have exactly size \( 2^{r+1} - 2^{\ell} \) (as small as possible), and maximize the amount of demand served by recovery sets of size $2^{\ell}$, i.e.,~\eqref{eq:min_allocate} to hold. Therefore:
        \[
        B \times (2^{r+1} - 2^{\ell}) \leq 2^m - \left(\sum\limits_{j = p({\ell})}^{q({\ell})} \lambda_{j,1}\right)\cdot 2^{\ell} = 2^m - 2^{\ell}.
        \]
        Solving for \( B \):
        \[
        B \leq \dfrac{2^m - 2^{\ell}}{2^{r+1} - 2^{\ell}}.
        \]
        
        Combining the bounds on \( \lambda_{j,1} \) and \( B \), we obtain:
        \[
        \sum\limits_{j = p({\ell})}^{q({\ell})} \lambda_j = \sum\limits_{j = p({\ell})}^{q({\ell})} \lambda_{j,1} + B \leq 1 + \dfrac{2^m - 2^{\ell}}{2^{r+1} - 2^{\ell}}.
        \]
        
        Thus, the upper bound is established:
        \begin{align}
        \label{eq:max_sum_demand}
        \sum\limits_{j = p({\ell})}^{q({\ell})} \lambda_j \leq 1 + \dfrac{2^m - 2^{\ell}}{2^{r+1} - 2^{\ell}}.
        \end{align}
        
        \item \textbf{Achievability} \\
        To demonstrate that the upper bound is achievable, we construct an allocation of demands that attain this bound. For each \( j \) in the range \( p({\ell}) \leq j \leq q({\ell}) \), set:
        \[
            \lambda_j = 1 + \dfrac{2^m - 2^{\ell}}{2^{r+1} - 2^{\ell}} = \lambda_j^{\max}.
        \]
        Specifically, we define the demand vector \( \boldsymbol{\lambda} \) as:
        \begin{align*}
            \boldsymbol{\lambda} = (\lambda_1, \dots, \lambda_j, \dots, \lambda_k) = \left(0, \dots, 0, 1 + \dfrac{2^m - 2^{\ell}}{2^{r+1} - 2^{\ell}}, 0, \dots, 0\right).
        \end{align*}
        This allocation vector \( \boldsymbol{\lambda} \) lies within the service region \( \mathcal{S}(\mathbf{G}) \), as shown in Theorem~\ref{thm:maximal_achievable_simplex}. For this particular allocation, the total sum of demands is:
        \[
            \sum\limits_{i=p({\ell})}^{q({\ell})} \lambda_i = 1 + \dfrac{2^m - 2^{\ell}}{2^{r+1} - 2^{\ell}},
        \]
        which matches the upper bound established in~\eqref{eq:max_sum_demand}. Similarly, for other values of \( j \) in the range \( [p({\ell}), q({\ell})] \), we can construct analogous allocations where only one \( \lambda_j \) is set to its maximum value while the others remain zero.
        
        By the convexity of the service region \( \mathcal{S}(\mathbf{G}) \), as established in Lemma~\ref{lem:convexity}, any linear combination of these allocation points also lies within \( \mathcal{S}(\mathbf{G}) \). Therefore, the upper bound is achievable.
    \end{enumerate}
\end{proof}
\newpage
\bibliography{bibliography}
\bibliographystyle{IEEEtran}


\end{document}